\documentclass[10pt, doublecolumn]{IEEEtran}
\pdfoutput=1
\usepackage{setspace}

\usepackage{cite}
\usepackage{indentfirst}
\usepackage{graphicx}
\usepackage{changepage}
\usepackage{stfloats}
\usepackage{amsfonts,amssymb}
\usepackage{bm}
\usepackage{algorithm}
\usepackage{algorithmic}
\usepackage{amsmath}
\usepackage{amsmath,amssymb,amsfonts}
\usepackage{times}
\usepackage{url}
\usepackage{cite}	
\usepackage{textcomp}
\usepackage{color}
\usepackage{setspace}
\usepackage{enumitem}
\usepackage{float}
\usepackage{diagbox}
\usepackage[T1]{fontenc}
\usepackage[utf8]{inputenc}
\usepackage{authblk}
\usepackage{threeparttable}
\usepackage{booktabs}
\usepackage{footnote}
\usepackage{graphicx}
\usepackage{pifont}
\usepackage[caption=false,font=footnotesize]{subfig}
\usepackage{mathrsfs}
\usepackage{bbm}
\usepackage{dsfont}
\usepackage{colortbl}
\usepackage{multirow}

\usepackage{makecell}
\usepackage[dvipsnames,table]{xcolor}

\allowdisplaybreaks[4]

\newenvironment{proof}{{\indent  \indent \it Proof:}}{\hfill $\blacksquare$}

\begin{document}
\title{Constellation Selection and Power Allocation for Multi-Cell OFDM-ISAC: Managing Inter-Cell Interference and Sensing Sidelobes}
\author{
	Kaitao Meng, \textit{Member, IEEE}, Kawon Han, \textit{Member, IEEE}, Christos Masouros, \textit{Fellow, IEEE}, and Lajos Hanzo, \textit{Life Fellow, IEEE}
	\thanks{Kaitao Meng is with the Department of Electrical and Electronic Engineering, University of Manchester, Manchester, UK (email: kaitao.meng@manchester.ac.uk). Kawon Han is with the Department of Electrical Engineering, Ulsan National Institute of Science and Technology (UNIST), Ulsan, South Korea (emails:kawon.han@unist.ac.kr). Christos Masouros is with the Department of Electronic and Electrical Engineering, University College London, London, UK (emails: c.masouros@ucl.ac.uk). L. Hanzo is with School of Electronics and Computer Science, University of Southampton, SO17 1BJ Southampton, UK (email: lh@ecs.soton.ac.uk) 
}}

\maketitle

\begin{abstract}
	Future integrated sensing and communication (ISAC) networks are expected to operate in dense multi-cell environments, where multiple base stations (BSs) share their time-frequency resources for communication and sensing. In such scenarios, the delay--Doppler (DD) sensing performance is strongly affected by random finite-alphabet orthogonal frequency-division multiplexing (OFDM) symbols, power allocation, receive filtering, and interference. This paper develops a modulation- and receive-filter-aware framework for the sensing-interference management in multi-cell OFDM-ISAC systems. Starting from a discrete-time OFDM sensing model, we derive closed-form signal-to-interference-plus-noise ratio (SINR) expressions for each range--Doppler bin under matched filtering (MF) and reciprocal filtering (RF). The analysis reveals distinct interference structures: MF depends on fourth-order constellation moments and power-overlap terms, whereas RF is governed by inverse-symbol-power and ratio-type interference terms. Based on these expressions, we obtain sensing-oriented power allocation structures, including a ramped water-filling solution for MF and a square-root allocation rule for RF. Furthermore, we jointly optimize the finite-alphabet constellation selection and power allocation under realistic communication and power constraints, and obtain tractable mixed-integer convex formulations for both MF and RF. Additionally, we study spectrum-overlap coordination in multi-cell scenarios and reveal the distinct MF/RF preferences for shared and orthogonalized tones. Furthermore, we extend the interference model to inter-cell propagation delays exceeding the cyclic prefix (CP), and show how the resultant delay violation redistributes the nominal interference spectrum into a delay-distorted effective spectrum. Numerical results validate the theoretical SINR expressions and demonstrate that joint constellation and power allocation is capable of reducing the DD-domain interference floor, enhancing weak-target visibility, and revealing the sensing--communication tradeoff in multi-cell OFDM-ISAC systems.
\end{abstract}

\begin{IEEEkeywords}
	Integrated sensing and communication, adaptive modulation, multi-cell networks, constellation design, cooperative constellation, power allocation, delay–Doppler processing
\end{IEEEkeywords}

\newtheorem{thm}{\bf Lemma}
\newtheorem{remark}{\bf Remark}
\newtheorem{Pro}{\bf Proposition}
\newtheorem{theorem}{\bf Theorem}
\newtheorem{Assum}{\bf Assumption}
\newtheorem{Cor}{\bf Corollary}
\newtheorem{Def}{\bf Definition}
\newtheorem{assumption}{\bf Assumption}

\section{Introduction}

Next generation (NG) wireless networks are expected to go beyond `bit-pipe' connectivity and provide high-precision, robust, and ubiquitous sensing as well as localization capabilities \cite{10054381, Liu2022Integrated, Lu2024Integrated}. Integrated sensing and communication (ISAC) has therefore emerged as a key enabler, allowing the same radio infrastructure, spectrum, and waveforms to serve both data transmission and environmental perception \cite{Liu2020JointRadar,Zhang2021OverviewSignal,Liu2022SurveyFundamental}. In this context, communication-centric ISAC, where standard communication waveforms and frame structures are reused for sensing, is particularly attractive for near-term deployment because the existing cellular infrastructure already provides wide-area coverage and such designs exhibit compatibility with legacy cellular and WLAN-type systems \cite{Mishra2019Toward,shi2022device,Wei2023Integrated}. 

Among existing communication waveforms, orthogonal frequency-division multiplexing (OFDM) has become the default choice in cellular and WLAN standards. Leveraging OFDM for sensing, however, is far from trivial \cite{Zhang2021OverviewSignal,shi2022device}. Unlike radar probing relying on deterministic waveforms, a communication-centric OFDM sensing receiver must process random data-bearing symbols \cite{Lu2024Random, 11548574}. Under matched filtering (MF), this randomness appears in the ambiguity-function (AF) output as data-induced sidelobes, i.e., undesired off-peak responses away from the true target delay-Doppler bin. Strong sidelobes can raise the effective clutter floor, mask weak targets, or even generate ghost responses \cite{Chitre2020AFShaping}. Since practical constellations such as quadrature amplitude modulation (QAM) constellations generally exhibit a non-constant-modulus, the sidelobe and interference structures are strongly coupled with the modulation format \cite{Du2024Reshaping,Du2025PCSFiltering}. To suppress data-induced sidelobes, reciprocal filtering (RF) and more general mismatched filters (MMFs) have been proposed for separating or compensating for data symbols in the time-frequency (TF) domain, trading output signal-to-noise ratio (SNR) for improved sidelobe behavior \cite{Du2025PCSFiltering,han2025Constellation}. These MF/MMF choices interact with the modulation in a complex way, shaping the delay-Doppler profile, the interference floor, and ultimately the sensing performance.

The interaction between finite-alphabet modulation and OFDM sensing has only recently begun to receive explicit attention. For OFDM ISAC, the MF sensing performance can be characterized through constellation-dependent statistics, and probabilistic constellation shaping (PCS) can tune the trade-off between communication rate and sensing metrics such as the output SNR, integrated sidelobe ratio (ISLR), and sensing-channel mean-squared error (MSE) \cite{Du2024Reshaping,Du2025PCSFiltering}. A key insight is that ambiguity-function fluctuations are governed by higher-order symbol statistics: constant-modulus constellations such as phase-shift keying (PSK) usually induce lower fluctuations than high-order QAM, whose higher amplitude swings lead to stronger random MF sidelobes \cite{11206742}. Related studies further show that the effect of finite-alphabet signaling is receiver-dependent: MF-type processing is mainly governed by fourth-order moments, whereas RF-type processing is sensitive to inverse-symbol-power moments \cite{han2025Constellation,Meng2026Constellation}. Existing sidelobe-reduction or constellation-optimization methods may improve sensing performance through bespoke precoder design, sensing-oriented resource-block allocation, or alphabet reshaping \cite{Li2025MIMOOFDM,Li2025SensingOriented,Hu2025Learning,Geiger2025JCS,Du2023GCWkshpsPCS,Yang2024TWC_RandomWF}. However, many of these techniques require instantaneous transmit-signal optimization, dedicated sensing resources, or non-standard constellation redesign, which may require new signaling constellations, bit mappings, and link-adaptation tables. This motivates a standards-compliant alternative: retain the existing constellation set, select suitable constellations across subcarriers and scenarios, and jointly allocate power to satisfy the communication quality-of-service (QoS), while improving sensing.

Despite these advances, practical 6G deployments will rarely operate as isolated monostatic links. In dense cellular networks, multiple base stations (BSs) simultaneously reuse the TF resources, acting both as cooperative illuminators for target localization and as strong interferers to each other's sensing receivers \cite{Meng2024CooperativeTWC,han2025network,Meng2024CooperativeISACMag}. At the same time, both clutter and self-interference generate a structured interference-plus-sidelobe floor that is highly sensitive to the waveform, power allocation, and receiver filter. Existing ISAC localization and Cramér-Rao lower bound (CRLB) analyses usually model these effects through equivalent Gaussian noise and interference powers. Hence, they tend to be agnostic to the underlying modulation, subcarrier power profiles, and filtering structure, leading to suboptimal performance in practice \cite{Li2023Towards,Jing2024ISACSky,Yang2023Deployment,Hua20243DMultiTarget}. Conversely, waveform-centric studies on PCS and MF/MMF mainly consider a single link and quantify sensing performance through channel-state information MSE or dynamic range, without fully capturing multi-cell interference coupling and the corresponding target-estimation accuracy \cite{Du2024Reshaping}.

This creates a gap between filter-aware waveform design
and network-level ISAC sensing-interference management: we still lack a unified framework that (i) keeps track of the finite-alphabet OFDM signaling and of the TF-domain filter, (ii) explicitly accounts for clutter and multi-cell interference, and (iii) propagates these effects to delay--Doppler (DD)-domain sensing SINR and effective interference-plus-sidelobe (EISL) levels for multi-cell
target estimation. At the same time, the design space in a cellular ISAC network is significantly richer than in a single-link setting: each BS can select its constellation family, e.g., PSK-like versus high-order QAM, allocate power across subcarriers, and coordinate with neighboring BSs within a cluster \cite{Xie2023Networked,Meng2024CooperativeTWC,Meng2024CooperativeISACMag}. How these choices should be made jointly, under realistic interference and clutter, so as to balance communication performance and sensing SINR is still poorly understood. This issue is particularly important when the design is constrained to standardized constellation modes rather than relying on freely redesigned alphabets. In networked OFDM-ISAC, inter-cell interference may also arrive outside the cyclic prefix (CP) and cause delay-dependent spectral leakage, because such interference
is a one-way BS-to-receiver component and can remain strong even when
the desired monostatic echoes are CP-limited.

In this paper, we address the above knowledge gap by developing a modulation-aware framework for multi-cell OFDM ISAC in interference-limited networks. We commence from a discrete-time model of the OFDM sensing chain under MF and RF, including monostatic clutter and unknown multi-cell interference. Building on and extending the dynamic-range and MSE analyses in prior work \cite{Du2024Reshaping,han2025Constellation,Meng2026Constellation}, we derive closed-form expressions for the DD-bin sensing SINR and EISL as functions of second- or fourth-order constellation moments, power allocation, and filtering choice. These per-link metrics are then propagated to multi-cell delay and Doppler estimates. On top of this analytical foundation, we investigate how constellation selection and power allocation, both across subcarriers and across BSs within a cooperative cluster, can be optimized to exploit these new degrees of freedom.
The main contributions of this work are summarized as follows:

\begin{itemize}[leftmargin=1.2em]
	\item 
	We develop a multi-cell OFDM-ISAC framework accounting for both clutter and inter-cell interference. For MF and RF receivers, we derive closed-form sensing SINR expressions, explicitly revealing how the sensing performance depends on constellation geometry, subcarrier power allocation, AF sidelobes, and multi-cell interference. The analysis shows that MF is governed by fourth-order constellation moments and power-overlap terms, whereas RF is governed by inverse
	second-order constellation moments and ratio-type interference terms.
	\item 
	Based on the SINR expressions derived, we establish the closed-form  optimal power allocation laws for sensing-oriented design in multi-cell scenarios. Specifically, MF leads to a ramped water-filling solution that suppresses highly interfered tones, whereas RF yields a square-root allocation rule that assigns more power to tones with higher reciprocal-filter-induced interference and noise amplification. These results provide a theoretical explanation for the opposite interference-management behaviors of MF and RF.
	\item 
	We formulate the joint constellation selection and power allocation problem to optimize sensing performance under communication and reliability guarantees, and power constraints. Using an auxiliary-variable reformulation, the MF and RF design problems are cast as mixed-integer convex quadratic-over-linear programs. This connects finite-alphabet adaptive modulation with sensing SINR optimization and provides a tractable framework for studying the sensing--communication tradeoff.
	\item 
	We analyze the impact of inter-cell spectrum overlap and propagation-delay-induced leakage on sensing interference. We formally show when MF would favor full spectral overlap or frequency orthogonalization. Conversely, active-set-based RF favors reduced spectral overlap because its inter-cell interference penalty scales with the per-subcarrier interfering-to-sensing power ratio. We further derive the effective interference spectrum for beyond-CP interference and prove that one-symbol CP violation preserves total interference energy, but redistributes it across subcarriers.
\end{itemize}

\section{System Model}

\subsection{Transmit ISAC Signal Model}
We consider a multi-cell OFDM-ISAC system having $L$ co-channel BSs that share the same $N$ subcarriers, employing monostatic sensing at each BS. BS-0 is the reference BS whose sensing receiver is considered, while BSs indexed by $\ell$, $\ell=1,\ldots,L-1$ act as co-channel sensing and communication interferers. {{Unless otherwise stated, we consider local interference-aware adaptation at BS-0, where the transmit power profiles of the other BSs are treated as fixed over the design interval considered. BS-0 optimizes its own mode and power allocation in response to the resultant inter-cell interference. Cooperative inter-cell coordination is studied separately in Section~\ref{sec:multi_bs_power}.}}
 The unitary discrete Fourier transform (DFT) matrix is denoted by $\mathbf F_N\in\mathbb C^{N\times N}$, with $[\mathbf F_N]_{t,n}=N^{-1/2}e^{-j2\pi tn/N}$ for $t,n=0,\ldots,N-1$, and $\mathbf F_N^{H}$ denotes the corresponding inverse DFT (IDFT) matrix. Each BS transmits $M$ OFDM symbols, each having a symbol duration of $T_{\mathrm{sym}}$. 

For BS-$\ell$, the frequency-domain (FD) symbol on subcarrier $n$ and OFDM symbol index $m$ is written as
$
	X^{(\ell)}_m[n] \triangleq \sqrt{P^{(\ell)}_n}\, s^{(\ell)}_{n,m},
$
where $\{P^{(\ell)}_n\}$ are per-subcarrier powers, assumed constant across $m$ symbols, and the data symbols satisfy
$
	\mathbb E[s^{(\ell)}_{n,m}]=0$,
$	\mathbb E \big[|s^{(\ell)}_{n,m}|^2\big]=1.$
For BS-$\ell$, we collect the subcarrier symbols of the $m$-th OFDM symbol into
\begin{equation}
	\label{eq:Xvec_def}
	\vspace{-1mm}
	\mathbf X_m^{(\ell)} = \mathrm{diag}\big(\sqrt{P^{(\ell)}_0},\ldots,\sqrt{P^{(\ell)}_{N-1}}\big)\,\mathbf s_m^{(\ell)} .
	\vspace{-1mm}
\end{equation}
The per-BS power budget is
\begin{equation}
	\label{eq:avg_power_constraint}
	\vspace{-1mm}
	\frac{1}{N}\sum\nolimits_{n=0}^{N-1}P^{(\ell)}_n \;=\; P_{\mathrm{ave}}^{(\ell)},
	P^{(\ell)}_n \ge 0.
	\vspace{-1mm}
\end{equation}
To account for modulation-dependent sensing statistics, each subcarrier of BS-0 selects one of the modes from $j=0,1,\ldots,J$, where $j=0$ denotes a sensing-only mode without communication requirements, and $j\ge1$ denotes a communication constellation $\mathcal C_j$. The mode  $j=0$ allows a subcarrier that is not needed for data transmission to still contribute sensing energy. The selection variable $u_{n,j}\in \{0,1\}$ satisfies
\begin{equation}
	\vspace{-1mm}
	\sum\nolimits_{j=0}^{J}u_{n,j}=1, \forall n .
	\vspace{-1mm}
\end{equation}
For each mode, we define
\begin{equation}
	\label{moment_math}
	\vspace{-1mm}
	\mu_{4,j}\triangleq \mathbb E_{s\sim\mathcal C_j}\left[|s|^4\right],
	\qquad
	\mu_{-2,j}\triangleq \mathbb E_{s\sim\mathcal C_j}\left[|s|^{-2}\right].
	\vspace{-1mm}
\end{equation}
Then the effective moments on subcarrier $n$ of BS-0 are \cite{11202391}
\begin{equation}
	\vspace{-1mm}
	\mu_{4,n}^{(0)}= \! \sum\nolimits_{j=0}^{J} \!  u_{n,j}\mu_{4,j},
	\qquad
	\mu_{-2,n}^{(0)}= \! \sum\nolimits_{j=0}^{J} \!  u_{n,j}\mu_{-2,j}.
	\vspace{-1mm}
\end{equation}
This gives $\mu_{4,0}=\mu_{-2,0}=1$, corresponding to the moment-optimal unit-modulus sensing waveform. The transmitted symbols are assumed independent across
BSs, subcarriers, and OFDM symbols.

\begin{figure}[t]
	\centering
	\includegraphics[width=7.5cm]{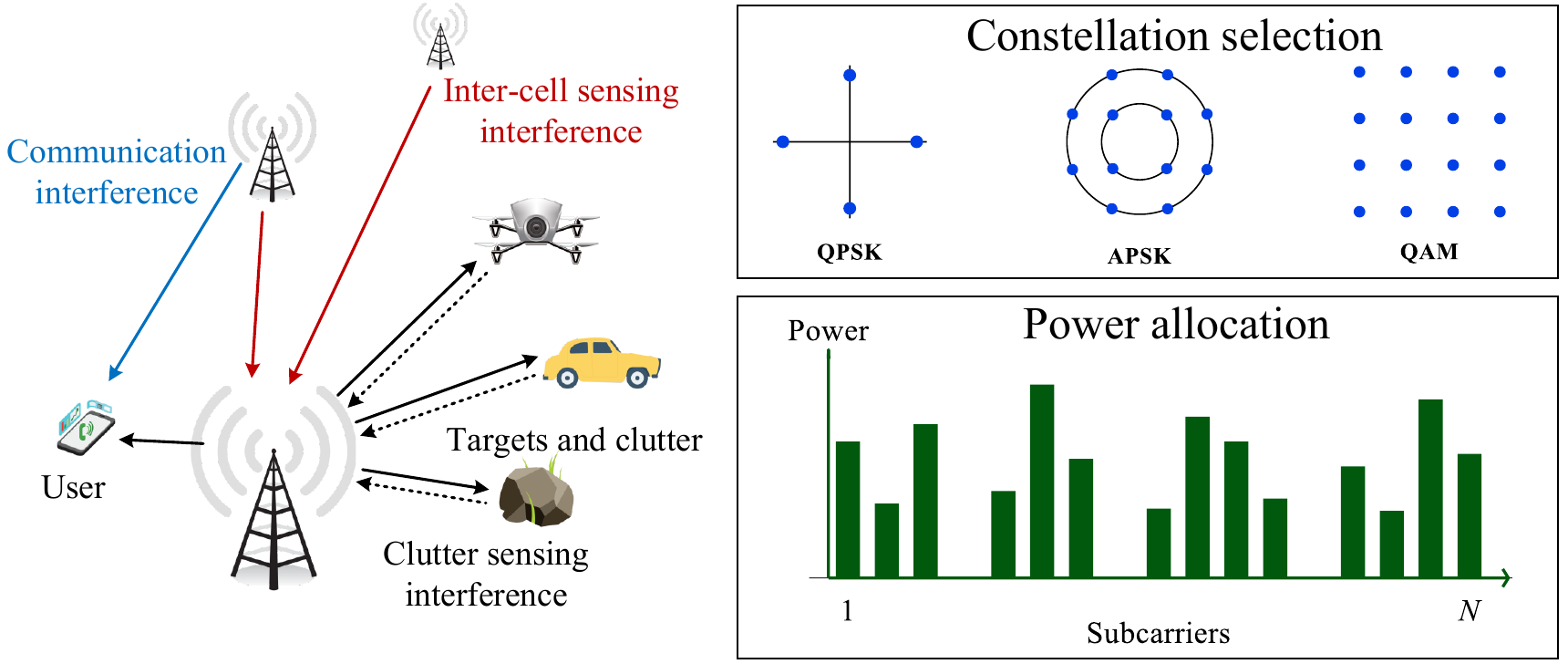}
	\vspace{-3mm}
	\caption{Multi-cell OFDM-ISAC interference management with constellation selection and power allocation.}
	\label{figure1}
\end{figure}
\subsection{Target and Interference Sensing Model}

For BS-$\ell$ to BS-0, the $q$-th propagation path is described by 
$\{\alpha_q^{(\ell,0)},\tau_q^{(\ell,0)},\nu_q^{(\ell,0)}\}$, where
$\alpha_q^{(\ell,0)}$, $\tau_q^{(\ell,0)}$, and $\nu_q^{(\ell,0)}$ denote the
complex path gain incorporating the target RCS (radar cross section), delay, and Doppler shift, respectively. Here, $\ell=0$ denotes the monostatic sensing links of BS-0 and $\ell\ge 1$ represents inter-cell sensing-interference links. The BS-0 monostatic path set includes both target echoes and clutter-scattering paths, which contribute through the same shifted DD response kernel in the subsequent per-bin SINR analysis. The index of $q=0$ is reserved for direct coupling or residual self-interference when present. For notational simplicity, we write
$\{\alpha_q^{(0)},\tau_q^{(0)},\nu_q^{(0)}\}
\equiv
\{\alpha_q^{(0,0)},\tau_q^{(0,0)},\nu_q^{(0,0)}\}$
for the BS-0 mono-static paths.

For BS-0, the dominant monostatic echoes are assumed to satisfy $\tau_q^{(0)}\le N_{\mathrm{CP}}$, so that the post-CP channel remains circular-shifted. Beyond-CP monostatic echoes are assumed negligible due to the substantial two-way path loss, whereas inter-cell interference may contain strong one-way components with $\tau_q^{(\ell,0)}>N_{\mathrm{CP}}$, as illustrated in Fig.~\ref{figure2}. For simplicity, we consider inter-symbol interference (ISI) caused by CP violation spanning at most one symbol. Higher delays can be treated similarly by shifting the contributing adjacent OFDM blocks.
Let $\mathbf y_m\in\mathbb C^{N}$ denote the CP-removed received block at BS-0, and let $\mathbf Y_m=\mathbf F_N\mathbf y_m$ with $Y_m[n]=[\mathbf Y_m]_n$. For a beyond-CP path, define
\begin{equation}
	\vspace{-1mm}
	d(\tau)\triangleq \tau-N_{\mathrm{CP}}.
	\vspace{-1mm}
\end{equation}
In the beyond-CP leakage analysis, \(d(\tau)\) is taken as an
integer sample delay. The CP-length shift contributes
only unit-modulus subcarrier phase factors in the FD.
Since the following analysis focuses on power-coupling coefficients, these
phase factors are omitted without changing
\(|[\mathbf G_i(d)]_{n,r}|^2\), \(i\in\{0,1\}\), defined in (\ref{eq:G0G1_def}). The matrices below
therefore explicitly model only the excess-delay-induced current-symbol
and previous-symbol mixing.
The FD ISI matrices are
\begin{equation}
	\label{eq:G0G1_def}
	\vspace{-1mm}
	\mathbf G_0(d)\triangleq \mathbf F_N \mathbf C_0(d)\mathbf F_N^{H},\qquad
	\mathbf G_1(d)\triangleq \mathbf F_N \mathbf C_1(d)\mathbf F_N^{H},
	\vspace{-1mm}
\end{equation}
where 
$
	\mathbf C_0(d)=
	\begin{bmatrix}
		\mathbf 0_{d\times (N-d)} & \mathbf 0_{d\times d}\\
		\mathbf I_{N-d} & \mathbf 0_{(N-d)\times d}
	\end{bmatrix},
$
$
	\mathbf C_1(d)=
	\begin{bmatrix}
		\mathbf 0_{d\times (N-d)} & \mathbf I_d\\
		\mathbf 0_{(N-d)\times (N-d)} & \mathbf 0_{(N-d)\times d}
	\end{bmatrix}.
$
Here, $\mathbf C_0(d)$ and $\mathbf C_1(d)$ represent the current-symbol and previous-symbol ISI contributions, respectively.
The aggregate TF operator encompassing beyond-CP delays is formulated as
\begin{equation}
	\label{eq:U_freq_def}
	\begin{aligned}
		\vspace{-1mm}
		&\mathcal U^{\mathrm f}_{\tau,\nu}(\mathbf X_m,\mathbf X_{m-1})
		\triangleq\; \\
		&\begin{cases}
			\mathbf D(\tau)\mathbf X_m,
			& 0\le \tau \le N_{\mathrm{CP}},\\[1mm]
			\begin{aligned}[t]
				&\mathbf G_0(d(\tau))\mathbf X_m +  \\
				& e^{-j2\pi \nu T_{\mathrm{sym}}}
				\mathbf G_1(d(\tau))\mathbf X_{m-1},
			\end{aligned}
			& N_{\mathrm{CP}} < \! \tau \!  \le N_{\mathrm{CP}}+N-1 .
		\end{cases}
		\vspace{-1mm}
	\end{aligned}
\end{equation}
where $
\mathbf D(\tau)\triangleq \mathrm{diag}\!\left(1,e^{-j2\pi\frac{1}{N}\tau},\ldots,e^{-j2\pi\frac{N-1}{N}\tau}\right).
$
Then, we have
	\begin{align}
		\label{eq:Y_model_RD}
		\vspace{-1mm}
		&Y_m[n] 
		=
		\underbrace{
			\sum_{q=0}^{Q_0} \alpha_q^{(0)} e^{j2\pi\nu_q^{(0)}m T_{\mathrm{sym}}}
			\big[\mathbf D(\tau_q^{(0)})\mathbf X_m^{(0)}\big]_n
		}_{\text{BS-0 monostatic returns}}
		\\
		&+
		\underbrace{
			\sum_{\ell=1}^{L-1}
			\sum_{q=0}^{Q_\ell}\alpha_q^{(\ell,0)} e^{j2\pi\nu_q^{(\ell,0)} m T_{\mathrm{sym}}}
			\big[\mathcal U^{\mathrm f}_{\tau_{q}^{(\ell,0)},\nu_q^{(\ell,0)}}
			(\mathbf X_m^{(\ell)},\mathbf X_{m-1}^{(\ell)})\big]_n
		}_{\text{interference from other BSs}} \nonumber
		\\
		&+
		{Z_m[n]} ,  \nonumber
		\vspace{-1mm}
	\end{align}
where $Z_m[n]=[\mathbf F_N\mathbf z_m]_n\sim\mathcal{CN}(0,\sigma_z^2)$, which is independently and identically distributed (i.i.d.) across $(n,m)$.

\subsection{DD-Domain Sensing Receive-Filter Output}
\label{DDdomain}
\vspace{-1mm}
Range--Doppler processing applies an element-wise weight $V_m[n]$ to each TF sample. We consider MF and RF weights
\begin{equation}\label{eq:V_MF_RF}
	\vspace{-1mm}
	V_m[n]=
	\begin{cases}
		X^{(0)*}_m[n], & \text{MF},\\[1mm]
		1/X^{(0)}_m[n], & \text{RF},
	\end{cases}
\vspace{-1mm}
\end{equation}
on active tones, and $V_m[n]=0$ otherwise.
The DD response is
\begin{equation}\label{eq:Lambda_DD_def}
	\vspace{-1mm}
	\Lambda[k,p]\triangleq \frac{1}{\sqrt{NM}}
	\sum_{m=0}^{M-1}\sum_{n=0}^{N-1}
	Y_m[n]V_m[n]\,
	e^{j\frac{2\pi}{N}nk}e^{-j\frac{2\pi}{M}mp}.
	\vspace{-1mm}
\end{equation}
Substituting \eqref{eq:Y_model_RD} into \eqref{eq:Lambda_DD_def} yields
\begin{equation}\label{eq:Lambda_DD_decomp}
	\vspace{-1mm}
	\Lambda[k,p]
	=
	\underbrace{\Lambda_{\rm Sen}[k,p]}_{\text{Mono-static sensing}}
	+
	\underbrace{\Lambda_{\rm Inter}[k,p]}_{\text{Interference}}
	+
	\underbrace{\Lambda_z[k,p]}_{\text{noise}},
	\vspace{-1mm}
\end{equation}
where
$
	\Lambda_{\rm Sen}[k,p]
	\triangleq
	\sum_{q=0}^{Q_0}\alpha_q^{(0)}
	\chi_q^{(0,0)}[k,p],
$
$
	\Lambda_{\rm Inter}[k,p]
	\triangleq
	\sum_{\ell=1}^{L-1}\sum_{q=0}^{Q_\ell}\alpha_q^{(\ell,0)}
	\chi_q^{(\ell,0)}[k,p],
$
and
$
	\Lambda_z[k,p]
	\triangleq
	\frac{1}{\sqrt{NM}}
	\sum_{m=0}^{M-1}\sum_{n=0}^{N-1}
	Z_m[n]V_m[n]\,
	e^{j\frac{2\pi}{N}nk}e^{-j\frac{2\pi}{M}mp}.
$
Here $\chi_q^{(\ell,0)}[k,p]$ denotes the weighted DD response of the $q$-th path arriving from BS-$\ell$:
\begin{equation}\label{eq:chi_def}
	\chi_q^{(\ell,0)}[k,p]
	\triangleq
	\frac{1}{\sqrt{NM}}
	\sum_{m=0}^{M-1}\sum_{n=0}^{N-1}
	\tilde X_{q,m}^{(\ell)}[n]\,V_m[n]\,
	e^{j\frac{2\pi}{N}nk}e^{-j\frac{2\pi}{M}mp},
\end{equation}
where $\tilde X_{q,m}^{(\ell)}[n]$ is the effective TF contribution of the $q$-th path.

\begin{figure}[t]
	\centering
	\includegraphics[width=7.5cm]{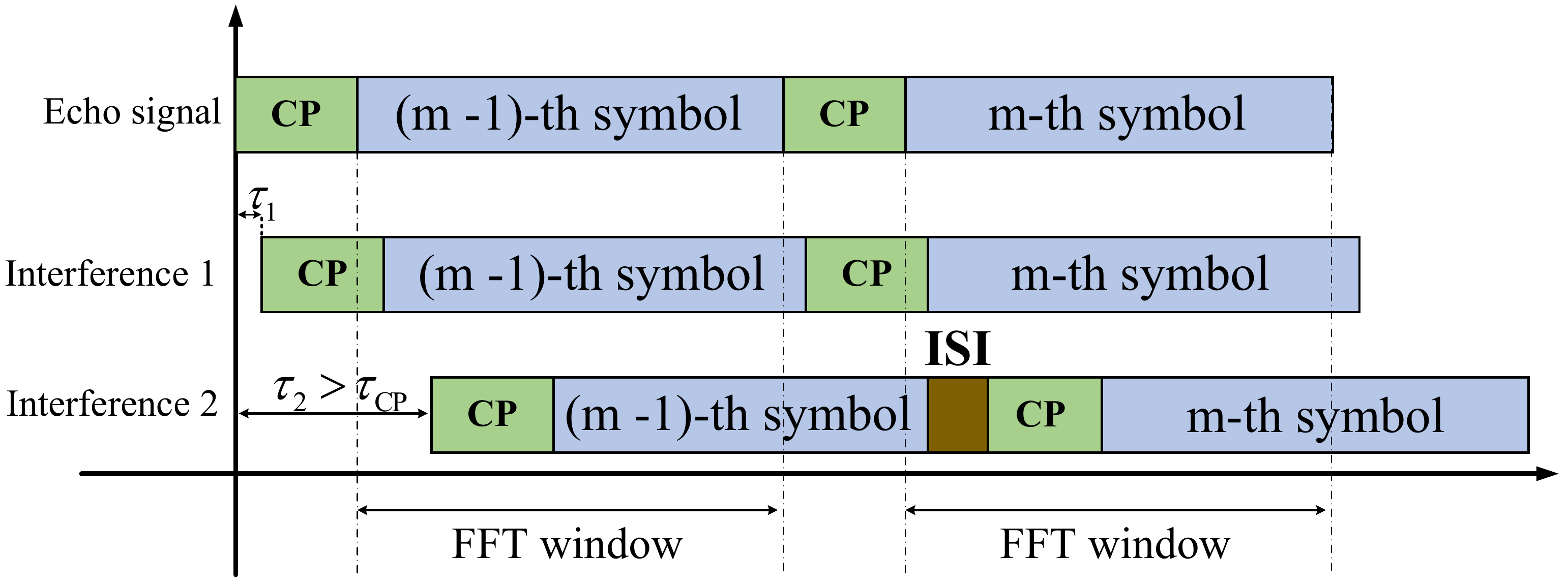}
	\vspace{-3mm}
	\caption{Timing structure of within-CP and beyond-CP inter-cell interference.}
	\label{figure2}
\end{figure}

\subsection{Communication QoS Model}

We impose a per-subcarrier communication QoS constraint for a representative downlink user associated with BS-0. The effective communication SINR on subcarrier $n$ is modeled as
\begin{equation}
	\label{eq:DL_SINR_simple}
	\vspace{-1mm}
	\gamma_n^{\mathrm c}
	=
	\frac{P_n^{(0)}|h_n^{\mathrm c}|^2}
	{\sigma_w^2+\sum_{\ell=1}^{L-1}\beta_n^{(\ell,\mathrm c)}P_n^{(\ell)}} ,
	\vspace{-1mm}
\end{equation}
where $h_n^{\mathrm c}$ is the desired channel, $\sigma_w^2$ is the noise power at the communication receiver, and $\beta_n^{(\ell,\mathrm c)}$ captures the effective inter-cell interference coupling, including possible timing misalignment and beyond-CP leakage. Define
$
	g_n
	\triangleq
	\frac{|h_n^{\mathrm c}|^2}
	{\sigma_w^2+\sum_{\ell=1}^{L-1}\beta_n^{(\ell,\mathrm c)}P_n^{(\ell)}} ,
$
so that
$
	\gamma_n^{\mathrm c}=g_nP_n^{(0)}.
$

{{Let $r_j=\log_2|\mathcal C_j|$ denote the number of mapped bits per
		modulation symbol of mode $j$. Since a modulated symbol transmitted
		on an OFDM subcarrier corresponds to a complex channel use, $r_j$
		is equivalently measured in bits per channel use under the
		OFDM resource-element model adopted. Following \cite{Meng2026Constellation}, each communication mode $j$ is
		associated with a constellation-dependent SINR threshold
		$
			\Gamma_j=\Gamma(\mathcal C_j,\epsilon_{\rm tar})
		$
		in order to meet a specific target BER $\epsilon_{\rm tar}$. The thresholds can be obtained from the corresponding BER-versus-SINR curves of the selected constellations. Naturally, higher-order constellations generally require higher SINR to achieve the same target BER.
		In exchange for a higher throughput, we set $r_0=\Gamma_0=0$ for the sensing-only mode. Accordingly, we define the average payload per channel use supported as}}
\begin{equation}
	\vspace{-1mm}
	R_{\mathrm{com}}^{(0)}
	=
	\frac{1}{N}\sum\nolimits_{n=0}^{N-1}\sum\nolimits_{j=1}^{J}u_{n,j}r_j
	\ge R_{\min}.
	\vspace{-1mm}
\end{equation}
The reliability constraint on communication-active subcarriers is
$
	\gamma_n^{\mathrm c}
	\ge
	\sum_{j=1}^{J}u_{n,j}\Gamma_j,
$
or equivalently,
$
	P_n^{(0)}
	\ge
	\sum_{j=1}^{J}u_{n,j}\frac{\Gamma_j}{g_n}.
$
Thus, if mode $j$ is selected on subcarrier $n$, the minimum power required is
$
	P_{n,j}^{\min}=\frac{\Gamma_j}{g_n}, j=1,\ldots,J.
$
{{For the sensing-only mode, no communication-driven minimum power is
		required.}}

\subsection{Problem Formulation}

{{For the given inter-cell interference profiles, we jointly optimize the mode selection and power allocation of BS-0 over the $N$ subcarriers for maximizing its sensing metric, while satisfying its communication QoS constraints.}} A closed-form sensing SINR is not yet available because the DD-domain interference depends jointly on the modulation statistics, power allocation, and receive filter. The problem is formulated as:
\begin{subequations}\label{prob:P1}
	\begin{align}
		(\mathrm{P1})\quad
		\max_{{\bm{U}},{\mathbf P}^{(0)}} \quad 
		& \Gamma_s({\bm{U}},{\mathbf P}^{(0)})  \label{P1:obj}\\
		\text{s.t.} 
		& 
		\frac{1}{N}
		\sum_{n=0}^{N-1}\sum_{j=1}^{J}u_{n,j}r_j
		\ge R_{\min}, \label{P1:rate} \\
		&   
		\gamma^{\rm c}_n
		\ge
		\sum_{j=1}^{J} u_{n,j}{\Gamma}_j,
		\forall n, \label{P1:ber}\\
		& 
		\frac{1}{N}\sum_{n=0}^{N-1} P_n^{(0)}  = P_{\mathrm{ave}}^{(0)},
		 P_n^{(0)} \in [0,P_{\max}], \forall n, \label{P1:power}\\
		& 
		\sum_{j=0}^{J} u_{n,j}=1,
		u_{n,j}\in\{0,1\},
		\forall n,\ j. \label{P1:select}
	\end{align}
\end{subequations}
Here, $q$ indexes the target path of interest, ${\bm U}=\{u_{n,j}\}$, $\mathbf P^{(0)}=\{P_n^{(0)}\}$, and $\Gamma_s({\bm U},\mathbf P^{(0)})\in \{ \mathrm{SINR}_q^{\mathrm{MF}},\mathrm{SINR}_q^{\mathrm{RF}}\}$ denotes the sensing output SINR after MF or RF processing, whose closed-form expressions are derived in Section III. The resultant design may then be
evaluated for all target range-Doppler bins. Constraints \eqref{P1:rate} and \eqref{P1:ber} impose the communication payload and reliability requirements, respectively; \eqref{P1:power} imposes the average-power and per-subcarrier peak-power constraints; and \eqref{P1:select} ensures that each subcarrier selects either the sensing-only mode of $j=0$ or a communication constellation from $j=1,\ldots,J$. The resultant problem is a mixed-integer nonlinear program (MINLP), since the constellation-dependent moments $\mu_{4,n}^{(0)}$ and $\mu_{-2,n}^{(0)}$, together with the power allocation, jointly affect the sensing metric and communication feasibility.

\section{Sensing Performance Analysis Within CP}
\vspace{-1.5mm}
{Existing studies mainly characterize modulation-induced sensing effects through AF sidelobe, dynamic-range, or estimation-MSE metrics, primarily for individual sensing links. Here, we derive DD-bin SINR expressions in multi-cell interference environments, revealing the power-overlap coupling of MF and the ratio-type interference coupling of RF that directly motivate the subsequent resource-allocation analysis.}

We first analyze a baseline case in which all paths lie within the CP. Specifically, the paths---including inter-cell interference
considered in this section---satisfy
$0\le \tau_q^{(\ell,0)}\le N_{\mathrm{CP}}$ with integer sample-spaced
delays, and their Doppler shifts are aligned with the coherent processing
grid, i.e.,
$
\nu_q^{(\ell,0)}T_{\mathrm{sym}}
=
\frac{\psi_q^{(\ell,0)}}{M},
\psi_q^{(\ell,0)}\in\{0,\ldots,M-1\}.
$
Under this model, each path induces a circular delay--Doppler shift of the
corresponding weighted kernel. Thus, we have
\begin{equation}
	\vspace{-1mm}
	\chi_q^{(\ell,0)}[k,p]
	=
	\chi^{(\ell,0)}
	\!\big[k-\tau_q^{(\ell,0)},\,p-\psi_q^{(\ell,0)}\big],
	\vspace{-1mm}
\end{equation}
where
\begin{equation}
	\vspace{-1mm}
	\chi^{(\ell,0)}[k,p]
	\triangleq
	\frac{1}{\sqrt{NM}}
	\sum_{m=0}^{M-1}\sum_{n=0}^{N-1}
	X_m^{(\ell)}[n]V_m[n]
	e^{j\frac{2\pi}{N}nk}e^{-j\frac{2\pi}{M}mp}.
	\vspace{-1mm}
\end{equation}

The path coefficients are independent of the data symbols and noise.
Throughout the SINR derivation, the path powers
$|\alpha_q^{(\ell,0)}|^2$ are treated as fixed large-scale quantities,
whereas the path phases are modeled as independent random phases. Equivalently,
different propagation paths are uncorrelated in the ensemble sense, i.e.,
$\mathbb E[\alpha_q^{(\ell,0)}\alpha_{q'}^{(\ell',0)*}]=0$ for
$(\ell,q)\neq(\ell',q')$. This removes cross-path terms in the averaged
DD-bin power, while retaining the conditioned path powers
$|\alpha_q^{(\ell,0)}|^2$ in the SINR expressions. The symbols are independent
both across BSs and across TF indices $(n,m)$.
Accordingly,
$
{\mathbb E}[\Lambda_{\rm Sen}\Lambda_{\rm Inter}^*]=0$,
$\mathbb E[\Lambda_{\rm Sen}\Lambda_{z}^*]=0,
\mathbb E[\Lambda_{\rm Inter}\Lambda_{z}^*]=0,
$
so that the total DD-bin power decomposes as
\begin{equation}\label{eq:Lambda_power_split}
	\vspace{-1mm}
	\mathbb E|\Lambda[k,p]|^2=
	\mathbb E|\Lambda_{\rm Sen}[k,p]|^2+
	\mathbb E|\Lambda_{\rm Inter}[k,p]|^2+
	\mathbb E|\Lambda_z[k,p]|^2.
	\vspace{-1mm}
\end{equation}

Let $\zeta^{(\ell)}_{n,m}\triangleq \tilde X^{(\ell)}_m[n]V_m[n]$. Then, under the adopted symbol model, the weighted samples $\{\zeta_{n,m}^{(\ell)}\}$ are assumed mutually uncorrelated across distinct TF indices.
Under the uncorrelated-grid model, we have
	\begin{align}\label{eq:chi_second_compact}
		\vspace{-1mm}
		\mathbb E\!\left[|\chi^{(\ell,0)}[k,p]|^2\right]
		&=\frac{1}{NM}\sum_{n,m}\mathrm{Var}(\zeta^{(\ell)}_{n,m}) \\
		&+\frac{1}{NM}\left|\sum_{n,m}\mathbb E[\zeta^{(\ell)}_{n,m}]
		e^{j\frac{2\pi}{N}nk}e^{-j\frac{2\pi}{M}mp}\right|^2. \nonumber
		\vspace{-1mm}
	\end{align}
For $\ell\ge1$, we have $\mathbb E[\tilde X^{(\ell)}_m[n]]=0$ and independence implies
$\mathbb E[\zeta^{(\ell)}_{n,m}]=0$, so \eqref{eq:chi_second_compact} reduces to a
flat (approximately $[k,p]$-independent) kernel power.
The filtered noise term satisfies
\begin{equation}\label{eq:P_noise}
	\vspace{-1mm}
	\mathbb E|\Lambda_z[k,p]|^2
	=\frac{\sigma_z^2}{NM}\sum_{n,m}\mathbb E|V_m[n]|^2.
	\vspace{-1mm}
\end{equation}

\begin{figure*}[t]
	\footnotesize
	\begin{equation}
		\label{SINR_derived_expression}
		\mathrm{SINR}^{\mathrm{MF}}_q
		\! = \!
		\frac{
			|\alpha_q^{(0)}|^2
			\mathbb E \big[|\chi^{(0,0)}_{\mathrm{MF}}[0,0]|^2\big]
		}
		{
			\underbrace{
				\sum_{\substack{q'=0,q'\neq q}}^{Q_0}
				|\alpha_{q'}^{(0)}|^2
				\frac{
					M\sum_{n=0}^{N-1}\big(P_n^{(0)}\big)^2\mu^{(0)}_{4,n}
					\! - \! {MN} \big(P_{\mathrm{ave}}^{(0)}\big)^2
					\! - \! \frac{1}{N}\sum_{n=0}^{N-1}
					\big(P_n^{(0)}\big)^2\big(\mu^{(0)}_{4,n}-1\big)
				}{NM-1}
			}_{\substack{\text{sidelobes due to other targets and clutter}}}
			\! + \!
			\underbrace{
				\frac{1}{N}\sum_{\ell=1}^{L-1}
				\sum_{q'=0}^{Q_\ell}|\alpha_{q'}^{(\ell,0)}|^2
				\sum_{n=0}^{N-1} P_n^{(0)}P_n^{(\ell)}
			}_{\text{inter-cell interference}}
			\! + \!
			\underbrace{
				\sigma_z^2 P_{\mathrm{ave}}^{(0)}
			}_{\text{noise}}
		}.
	\end{equation}
\end{figure*}

\subsection{SINR of Matched Filtering}
\label{sec:mf_sinr}

Using the DD-domain model in
\eqref{eq:Lambda_DD_def}--\eqref{eq:Lambda_DD_decomp}, consider the $q$-th
monostatic path of BS-0 with DD pair
$(\tau_q^{(0)},\psi_q^{(0)})$. On the matched DD bin
$[k,p]=(\tau_q^{(0)},\psi_q^{(0)})$, the MF receive weight is
$V_m[n]=X^{(0)*}_m[n]$, and the DD-domain output can be decomposed
as
\begin{equation}
	\label{eq:Lambda_MF_decomp_q}
	\begin{aligned}
		\vspace{-1mm}
		&\Lambda_{\mathrm{MF}}[\tau_q^{(0)},\psi_q^{(0)}]
		= {\alpha_q^{(0)} \chi^{(0,0)}_{\mathrm{MF}}[0,0]}  \\[2pt]
		&\quad + {\sum_{\substack{q'=0\\q'\ne q}}^{Q_0}
			\alpha_{q'}^{(0)} \chi^{(0,0)}_{\mathrm{MF}} \big[\tau_q^{(0)}-\tau_{q'}^{(0)},\,
			\psi_q^{(0)}-\psi_{q'}^{(0)}\big]} \\[2pt]
		&\quad + {\sum_{\ell=1}^{L-1}
			\sum_{q'=0}^{Q_\ell}\alpha_{q'}^{(\ell, 0)}
			\chi^{(\ell,0)}_{\mathrm{MF}} \big[\tau_q^{(0)}-\tau_{q'}^{(\ell, 0)},\,
			\psi_q^{(0)}-\psi_{q'}^{(\ell, 0)}\big]}\\
		&\quad + \Lambda_{z,\mathrm{MF}}[\tau_q^{(0)},\psi_q^{(0)}],
		\vspace{-1mm}
	\end{aligned}
\end{equation}
where $\chi^{(\ell,0)}_{\mathrm{MF}}[k,p]$ is the MF cross-ambiguity kernel obtained with $V_m[n]=X^{(0)*}_m[n]$.

The DD-bin SINR  after filtering can be defined as
\begin{equation}\label{eq:SINR_MF_bin_def}
	\vspace{-1mm}
	\mathrm{SINR}^{\mathrm{MF}}_q
	\triangleq
	\frac{\mathbb E\!\left[|\alpha_q^{(0)}|^2\big|\chi^{(0,0)}_{\mathrm{MF}}[0,0]\big|^2\right]}
	{\mathbb E\!\left[\left|\Lambda_{\mathrm{MF}}[\tau_q^{(0)},\psi_q^{(0)}]
		-\alpha_q^{(0)}\chi^{(0,0)}_{\mathrm{MF}}[0,0]\right|^2\right]}.
		\vspace{-1mm}
\end{equation}
Since the transmitted waveform is known at BS-0,
$\alpha_q^{(0)}\chi_{\rm MF}^{(0,0)}[0,0]$ is treated as the useful
data-aided coherent component on the matched DD bin. The denominator in
\eqref{eq:SINR_MF_bin_def} therefore collects the residual undesired monostatic paths, inter-cell
interference paths, and filtered noise terms.

\begin{Pro}\label{MFSINR_Expression}
	Under the uncorrelated-grid model and the uniform-offset approximation for
	undesired monostatic paths, the MF per-bin SINR in the DD domain is
	approximated by (\ref{SINR_derived_expression}), as shown at the top of this page.
\end{Pro}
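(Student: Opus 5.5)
The plan is to evaluate the numerator and each denominator term of \eqref{eq:SINR_MF_bin_def} separately. Because the path phases are independent and the symbols are independent across BSs, the power split \eqref{eq:Lambda_power_split} applies. The residual power is therefore the sum of three parts: the monostatic sidelobe powers $|\alpha_{q'}^{(0)}|^2\mathbb E|\chi^{(0,0)}_{\mathrm{MF}}[\Delta k,\Delta p]|^2$ for $q'\neq q$, the inter-cell kernel powers $|\alpha_{q'}^{(\ell,0)}|^2\mathbb E|\chi^{(\ell,0)}_{\mathrm{MF}}[\cdot,\cdot]|^2$, and the filtered noise \eqref{eq:P_noise}. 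The numerator is left as $|\alpha_q^{(0)}|^2\mathbb E|\chi^{(0,0)}_{\mathrm{MF}}[0,0]|^2$, exactly as written in \eqref{SINR_derived_expression}.

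\textbf{Noise and inter-cell terms.} These are direct.

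For the noise, $\mathbb E|V_m[n]|^2=P_n^{(0)}\mathbb E|s_{n,m}^{(0)}|^2=P_n^{(0)}$. Then \eqref{eq:P_noise} gives $\frac{\sigma_z^2}{NM}\cdot M\sum_n P_n^{(0)}=\sigma_z^2P_{\mathrm{ave}}^{(0)}$ by \eqref{eq:avg_power_constraint}.

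For $\ell\ge1$, set $\zeta_{n,m}=X_m^{(\ell)}[n]X_m^{(0)*}[n]$ (up to the unit-modulus delay phase). It has zero mean and variance $P_n^{(\ell)}P_n^{(0)}$, since the two symbol streams are independent with unit power. Plugging this into \eqref{eq:chi_second_compact} gives the flat value $\frac{1}{NM}\cdot M\sum_n P_n^{(0)}P_n^{(\ell)}=\frac1N\sum_nP_n^{(0)}P_n^{(\ell)}$, independent of the offset. This yields the inter-cell term.

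\textbf{Monostatic sidelobe term.} This is the main obstacle, because $\zeta_{n,m}=P_n^{(0)}|s_{n,m}^{(0)}|^2$ has nonzero mean. The kernel power at offset $(k,p)$ then depends on the offset:
\[
\mathbb E|\chi^{(0,0)}_{\mathrm{MF}}[k,p]|^2=\frac{1}{NM}\sum_{n,m}(P_n^{(0)})^2(\mu_{4,n}^{(0)}-1)+\frac{M}{N}\Big|\sum_n P_n^{(0)}e^{j\frac{2\pi}{N}nk}\Big|^2\delta[p].
\]
The variance uses $\mathrm{Var}(|s|^2)=\mu_4-1$, and the mean part is a periodic sum over $m$ that collapses to $M\delta[p]$.

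I would invoke the uniform-offset approximation: each undesired path offset is treated as uniformly distributed over the $NM-1$ nonzero DD bins, so its sidelobe power is replaced by the average over those bins. I would compute that average via Parseval. The total kernel energy is $\sum_{k,p}\mathbb E|\chi|^2=\sum_{n,m}\mathbb E|\zeta_{n,m}|^2=M\sum_n(P_n^{(0)})^2\mu_{4,n}^{(0)}$. The mainlobe is
\[
\mathbb E|\chi[0,0]|^2=\tfrac1N\sum_n(P_n^{(0)})^2(\mu_{4,n}^{(0)}-1)+MN(P_{\mathrm{ave}}^{(0)})^2,
\]
using $\sum_nP_n^{(0)}=NP_{\mathrm{ave}}^{(0)}$. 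Subtracting the mainlobe from the total energy and dividing by $NM-1$ gives exactly the bracketed sidelobe expression in \eqref{SINR_derived_expression}.

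Finally, I would check that the cross terms between distinct undesired paths vanish, which follows from the independent random phases. Assembling the three terms then gives \eqref{SINR_derived_expression}.
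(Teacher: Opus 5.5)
Your proposal is correct and follows the paper's Appendix A essentially step by step. It uses the same second-moment self-kernel formula (the mean part collapsing to $M\delta[p]$ plus the flat $\frac{1}{N}\sum_n (P_n^{(0)})^2(\mu_{4,n}^{(0)}-1)$ floor) and the same Parseval-based average sidelobe $\bar\rho_{\mathrm{MF}}$ under the uniform-offset model. It also arrives at the same flat inter-cell kernel power $\frac{1}{N}\sum_n P_n^{(0)}P_n^{(\ell)}$ and noise term $\sigma_z^2P_{\mathrm{ave}}^{(0)}$, and you explicitly work out the noise and inter-cell variance steps that the paper only asserts.
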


\begin{proof}
	Please  refer to Appendix A.
\end{proof}

This expression coincides with the 1-D range-domain result and it is now explicitly interpreted as the SINR on the
DD bin $(\tau_q^{(0)},\psi_q^{(0)})$ obtained from
$\Lambda[k,p]$.

\subsection{SINR of Reciprocal Filtering}
\label{sec:rf_sinr}

In the DD-domain framework, RF uses $V_m[n]=1/X_m^{(0)}[n]$ on sensing-active tones. Unless otherwise stated, we adopt the full-band RF model, where all $N$ sensing subcarriers of BS-0 remain active and $P_n^{(0)}>0$. Thus, communication inactivity does not imply sensing silence. 

Let $\chi_{\mathrm{RF}}^{(\ell,0)}[k,p]$ denote the RF
cross-ambiguity kernel obtained using $V_m[n]=1/X_m^{(0)}[n]$.
Substituting this weight into \eqref{eq:Lambda_MF_decomp_q} gives the
same desired-path, undesired-monostatic, inter-cell-interference, and
noise decomposition, with $\chi_{\mathrm{MF}}^{(\ell,0)}$ replaced by
$\chi_{\mathrm{RF}}^{(\ell,0)}$.

\begin{Pro}\label{SensingSINRRF}
	The RF per-bin SINR is given by
	\begin{equation}
		\label{eq:SINR_RF_final_DD}
		\begin{aligned}
			\vspace{-1mm}
			\mathrm{SINR}^{\mathrm{RF}}_q
			\!	= \!
			\frac{|\alpha_q^{(0)}|^2\, M N^2}
			{\displaystyle
				\sigma_z^2\sum_{n=0}^{N-1} \! \frac{\mu^{(0)}_{-2,n}}{P^{(0)}_n}
				\! + \!
				\sum_{\ell=1}^{L-1}\Big(\sum_{q'}|\alpha_{q'}^{(\ell, 0)}|^2\Big)
				\! \sum_{n=0}^{N-1} \! \frac{P^{(\ell)}_n}{P^{(0)}_n}\mu^{(0)}_{-2,n}}.
				\vspace{-1mm}
		\end{aligned}
	\end{equation}
\end{Pro}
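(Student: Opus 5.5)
The plan is to evaluate the three terms of the per-bin SINR definition \eqref{eq:SINR_MF_bin_def}, with $\chi_{\mathrm{MF}}$ replaced by $\chi_{\mathrm{RF}}$ and $V_m[n]=1/X_m^{(0)}[n]$. I will use the power split \eqref{eq:Lambda_power_split}, the second-moment formula \eqref{eq:chi_second_compact}, and the noise formula \eqref{eq:P_noise}. The point that makes RF tractable is that every monostatic weighted sample is deterministic, while every inter-cell sample is zero-mean. The full-band assumption $P_n^{(0)}>0$ guarantees $V_m[n]$ is well defined on all $N$ tones, and the sensing-only mode has $\mu_{-2,0}=1$, so it fits the same formula.

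\emph{Desired term.} For the matched path, $\zeta^{(0)}_{n,m}=X_m^{(0)}[n]/X_m^{(0)}[n]=1$ after the circular DD shift is compensated at $[k,p]=(\tau_q^{(0)},\psi_q^{(0)})$. Hence $\chi^{(0,0)}_{\mathrm{RF}}[0,0]=\sqrt{NM}$ and the numerator equals $|\alpha_q^{(0)}|^2NM$.

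\emph{Undesired monostatic term.} For $q'\neq q$, the weighted sample is again deterministic with unit modulus: it is $e^{-j2\pi n\tau_{q'}^{(0)}/N}e^{j2\pi m\psi_{q'}^{(0)}/M}$. So the variance part of \eqref{eq:chi_second_compact} vanishes. The mean part is a full 2-D DFT of a pure complex exponential evaluated at a nonzero grid offset $(\tau_q^{(0)}-\tau_{q'}^{(0)},\psi_q^{(0)}-\psi_{q'}^{(0)})\neq(0,0)$, which is exactly zero on the integer grid. RF therefore has no clutter or other-target sidelobes, and no analogue of the first bracket in \eqref{SINR_derived_expression} appears. This is where the within-CP, grid-aligned assumptions of this section are used.

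\emph{Noise term.} Since $\mathbb E|V_m[n]|^2=\mathbb E[|s|^{-2}]/P_n^{(0)}=\mu_{-2,n}^{(0)}/P_n^{(0)}$, formula \eqref{eq:P_noise} gives $\frac{\sigma_z^2}{N}\sum_n\mu^{(0)}_{-2,n}/P^{(0)}_n$.

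\emph{Inter-cell term.} For $\ell\ge1$, the weighted sample is $\zeta^{(\ell)}_{n,m}=\sqrt{P_n^{(\ell)}/P_n^{(0)}}\,s^{(\ell)}_{n,m}/s^{(0)}_{n,m}$ times a unit-modulus phase. It is zero mean, because $s^{(\ell)}$ is independent of $s^{(0)}$ and has zero mean. Its variance is $(P_n^{(\ell)}/P_n^{(0)})\mu^{(0)}_{-2,n}$, using $\mathbb E|s^{(\ell)}|^2=1$. By \eqref{eq:chi_second_compact}, each interfering path then contributes the flat kernel power $\frac1N\sum_n (P^{(\ell)}_n/P^{(0)}_n)\mu^{(0)}_{-2,n}$, independent of its DD offset. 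Summing over paths weighted by $|\alpha_{q'}^{(\ell,0)}|^2$ gives the second denominator term.

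Dividing the numerator by the total denominator and multiplying numerator and denominator by $N$ yields \eqref{eq:SINR_RF_final_DD}.

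The step I expect to need the most care is the inter-cell sum. Different paths $q'$ from the same BS-$\ell$ carry identical symbols, so their kernels are correlated, and cross terms must be removed by the independent-random-phase model $\mathbb E[\alpha_q\alpha_{q'}^*]=0$ rather than by symbol independence. I will also check that the uncorrelated-grid assumption makes $\zeta^{(\ell)}_{n,m}$ uncorrelated across $(n,m)$ despite the common factor $1/s^{(0)}_{n,m}$. This holds because $s^{(\ell)}$ is zero mean and independent across TF indices.
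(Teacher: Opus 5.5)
Your proposal is correct and takes the same route as the paper's Appendix B. The steps match: the desired RF kernel $\chi^{(0,0)}_{\mathrm{RF}}[0,0]=\sqrt{NM}$; the flat inter-cell kernel power $\frac{1}{N}\sum_n \frac{P_n^{(\ell)}}{P_n^{(0)}}\mu_{-2,n}^{(0)}$; the filtered noise power $\frac{\sigma_z^2}{N}\sum_n \mu_{-2,n}^{(0)}/P_n^{(0)}$; vanishing monostatic sidelobes; and a final scaling of numerator and denominator by $N$. Two of your justifications spell out what the paper only asserts: the exact zero of the 2-D DFT of a pure exponential at a nonzero on-grid offset, and the use of the random-phase path model to kill cross-path terms within the same BS.
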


\begin{proof}
	Please refer to Appendix B.
\end{proof}

\begin{remark}
	{{Intuitively, MF and RF exhibit opposing interference
			trends. Explicitly, under MF, allocating more power to an interfered tone increases
			the power-overlap term and hence the interference. By contrast, under RF,
			a lower $P_n^{(0)}$ increases the reciprocal-filter gain
			$1/X_m^{(0)}[n]$ and augments the interference and noise on that tone,
			as reflected by the ratio $P_n^{(\ell)}/P_n^{(0)}$.}}  This contrast will be made explicit through the sensing-only special cases below.
\end{remark}

\subsection{Special Cases: Optimal Sensing without Communication Constraints}
In this section, we focus on sensing-oriented interference management and temporarily disregard the communication-related constraints, so as to isolate the fundamental sensing trade-off induced by inter-cell spectrum coupling.

\subsubsection{Optimal Power Allocation for MF}

Under the equality sum-power constraint $\sum_{n=0}^{N-1}P_n^{(0)}=N P_{\mathrm{ave}}^{(0)}$, the MF noise term is independent of the power profile. The MF denominator-minimization surrogate is therefore
\begin{equation}
	\label{eq:MF_denominator_reduced}
	\begin{aligned}
		\vspace{-1mm}
		\widetilde{\mathcal D}^{\mathrm{MF}}\big(\mathbf P^{(0)};\{\mathbf P^{(\ell)}\}\big)
		&=
		S^{(0)}_q\sum_{n=0}^{N-1} b_n^{(0)} \big(P_n^{(0)}\big)^2
		\\ 
		&+
		\frac{1}{N}\sum_{\ell\neq 0}\Xi_{0,\ell}\sum_{n=0}^{N-1}P_n^{(0)}P_n^{(\ell)}.
		\vspace{-1mm}
	\end{aligned}
\end{equation}
where
$b_n^{(0)}
\triangleq
\frac{
	M\mu_{4,n}^{(0)}-\frac{\mu_{4,n}^{(0)}-1}{N}
}{NM-1}$, $S^{(0)}_q \triangleq \sum_{q' \neq q} \left|\alpha_{q'}^{(0)}\right|^2,
$ and 
$	\Xi_{0,\ell} \triangleq \sum_{q''=0}^{Q_\ell}
	|\alpha_{q''}^{(\ell, 0)}|^2.
$
{{Given $\{P_n^{(\ell)}\}_{\ell\neq0}$, we minimize this convex
		denominator surrogate subject to \eqref{eq:avg_power_constraint}.
		Under the fixed sum-power constraint, the dominant coherent component
		of the MF numerator is $MN(P_{\mathrm{ave}}^{(0)})^2$ and it is independent
		of the power profile. Its remaining profile-dependent term is bounded by
		$(\mu_{4,\max}-1)P_{\max}P_{\mathrm{ave}}^{(0)}$, where
		$\mu_{4,\max}\triangleq\max_j\mu_{4,j}$, and it is therefore negligible
		relative to the coherent mainlobe for high $MN$.}}

\begin{Pro}
	\label{prop:MF_BR}
Under the dominant-mainlobe approximation, the surrogate-optimal MF power allocation of BS-$0$ is
	\begin{equation}
		\label{eq:MF_ramped_WF}
		\vspace{-1mm}
		P_n^{(0)\star}
		=
		\left[
		\frac{
			\lambda_m-\frac{1}{N}\sum_{\ell\neq 0}\Xi_{0,\ell}P_n^{(\ell)}
		}{
			2S^{(0)}_q b_n^{(0)}
		}
		\right]_+^{P_{\max}},
		 \forall n,
		 \vspace{-1mm}
	\end{equation}
	where $[x]_{+}^{P_{\max}} \triangleq \min\{P_{\max},\max\{x,0\}\}$, and $\lambda_m$ is chosen such that
	$\sum_{n=0}^{N-1}P_n^{(0)\star}=N P_{\mathrm{ave}}^{(0)}$.
\end{Pro}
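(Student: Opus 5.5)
Under the dominant-mainlobe approximation, the MF numerator in \eqref{SINR_derived_expression} reduces to the profile-independent term $|\alpha_q^{(0)}|^2 MN(P_{\mathrm{ave}}^{(0)})^2$. The noise term $\sigma_z^2P_{\mathrm{ave}}^{(0)}$ and the constant $-MN(P_{\mathrm{ave}}^{(0)})^2/(NM-1)$ inside the sidelobe term are also fixed under the equality constraint \eqref{eq:avg_power_constraint}. Maximizing $\mathrm{SINR}_q^{\mathrm{MF}}$ is therefore equivalent to minimizing the surrogate $\widetilde{\mathcal D}^{\mathrm{MF}}$ in \eqref{eq:MF_denominator_reduced} over the box-constrained simplex
\[
\mathcal P=\Big\{\mathbf P^{(0)}:\ \sum_{n}P_n^{(0)}=NP_{\mathrm{ave}}^{(0)},\ 0\le P_n^{(0)}\le P_{\max}\Big\}.
\]
I would first regroup the sidelobe numerator of \eqref{SINR_derived_expression} as $\sum_n (P_n^{(0)})^2\big(M\mu_{4,n}^{(0)}-(\mu_{4,n}^{(0)}-1)/N\big)$, which identifies $b_n^{(0)}$. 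Since $\mu_{4,n}^{(0)}\ge 1$ (Jensen with unit second moment), we have $M\mu_{4,n}^{(0)}-(\mu_{4,n}^{(0)}-1)/N\ge \mu_{4,n}^{(0)}(M-1/N)>0$, so $b_n^{(0)}>0$.

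The surrogate is then separable:
\[
\widetilde{\mathcal D}^{\mathrm{MF}}=\sum_n f_n\big(P_n^{(0)}\big),\qquad f_n(x)=S_q^{(0)}b_n^{(0)}x^2+c_n x,\qquad c_n\triangleq\frac{1}{N}\sum_{\ell\ne0}\Xi_{0,\ell}P_n^{(\ell)}\ge0 .
\]
Assuming $S_q^{(0)}>0$, each $f_n$ is strictly convex, so the problem is a strictly convex quadratic program over a compact convex polytope. It has a unique minimizer, and because all constraints are affine (Slater is not needed), the KKT conditions are necessary and sufficient. Feasibility requires $P_{\mathrm{ave}}^{(0)}\le P_{\max}$.

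Next I would form the Lagrangian with a multiplier $\lambda_m$ for the sum-power equality and multipliers $\underline{\eta}_n,\overline{\eta}_n\ge 0$ for the box constraints. Stationarity gives
\[
2S_q^{(0)}b_n^{(0)}P_n^{(0)}+c_n=\lambda_m+\underline{\eta}_n-\overline{\eta}_n .
\]
Complementary slackness splits each tone into three cases. If $0<P_n^{(0)}<P_{\max}$, then $P_n^{(0)}=(\lambda_m-c_n)/(2S_q^{(0)}b_n^{(0)})$. If $P_n^{(0)}=0$, then $\lambda_m\le c_n$. If $P_n^{(0)}=P_{\max}$, then $(\lambda_m-c_n)/(2S_q^{(0)}b_n^{(0)})\ge P_{\max}$. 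Combining the three cases gives exactly the clipped form \eqref{eq:MF_ramped_WF}.

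The remaining step is existence and uniqueness of $\lambda_m$. The map $\lambda\mapsto\sum_n[(\lambda-c_n)/(2S_q^{(0)}b_n^{(0)})]_+^{P_{\max}}$ is continuous and nondecreasing. It equals $0$ for $\lambda\le\min_n c_n$ and $NP_{\max}$ for $\lambda$ large enough. Since $0\le NP_{\mathrm{ave}}^{(0)}\le NP_{\max}$, the intermediate value theorem yields a $\lambda_m$ meeting the sum constraint, which can be found by bisection. Strict convexity makes the resulting $\mathbf P^{(0)\star}$ unique, even when $\lambda_m$ itself is not (on flat segments of the map).

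I expect the main obstacle to be justifying the reduction to the surrogate, not the KKT algebra. One must check that the profile-dependent part of the numerator $\mathbb E|\chi_{\mathrm{MF}}^{(0,0)}[0,0]|^2$ is bounded by $(\mu_{4,\max}-1)P_{\max}P_{\mathrm{ave}}^{(0)}$. I would verify this using $\sum_n (P_n^{(0)})^2(\mu_{4,n}^{(0)}-1)\le(\mu_{4,\max}-1)P_{\max}\,NP_{\mathrm{ave}}^{(0)}$. After the $1/(NM)$ normalization this is $O(1/M)$ relative to the coherent term $MN(P_{\mathrm{ave}}^{(0)})^2$, so the optimizer is stated as surrogate-optimal under that approximation. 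The degenerate case $S_q^{(0)}=0$, where the surrogate becomes a linear program and all power goes to the least-interfered tones, would be treated as the limiting case.
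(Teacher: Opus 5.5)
Your proposal is correct and follows essentially the same route as the paper: you drop the profile-independent coherent-mainlobe and noise terms under the equality sum-power constraint, then apply KKT stationarity with complementary slackness (including the multiplier for $P_n^{(0)}\le P_{\max}$) to obtain the clipped ramped water-filling form. Your extra steps make explicit what the paper leaves implicit: showing $b_n^{(0)}>0$ via $\mu_{4,n}^{(0)}\ge 1$, using strict convexity for uniqueness of $\mathbf P^{(0)\star}$, and proving existence of $\lambda_m$ through the continuous, monotone clipped-sum map, whereas the paper only remarks that $\lambda_m$ can be found by bisection.
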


\begin{proof}
	Under the equality sum-power constraint, the noise term in the
	MF SINR denominator is constant and can be omitted without affecting
	the optimizer. Therefore, the MF power-allocation subproblem is the convex program
	\[
	\vspace{-1mm}
	\min_{\mathbf P^{(0)}\succeq \mathbf 0}
	\;
	S^{(0)}_q\sum_{n=0}^{N-1} b_n^{(0)} \big(P_n^{(0)}\big)^2
	+
	\frac{1}{N}\sum_{\ell\neq 0}\Xi_{0,\ell}\sum_{n=0}^{N-1}P_n^{(0)}P_n^{(\ell)}
	\vspace{-1mm}
	\]
	subject to
	$
	\sum_{n=0}^{N-1}P_n^{(0)}=N P_{\mathrm{ave}}^{(0)}, P_n^{(0)}\ge 0.
	$
	The Lagrangian is
	$
	\mathcal L
	=
	S^{(0)}_q\sum_{n=0}^{N-1} b_n^{(0)} \big(P_n^{(0)}\big)^2
	+
	\frac{1}{N}\sum_{\ell\neq 0}\Xi_{0,\ell}\sum_{n=0}^{N-1}P_n^{(0)}P_n^{(\ell)}
	-\lambda_m\!\left(\sum_{n=0}^{N-1}P_n^{(0)}-N P_{\mathrm{ave}}^{(0)}\right)
	-\sum_{n=0}^{N-1}\eta_n P_n^{(0)}.
	$
	The Karush--Kuhn--Tucker (KKT)  stationarity condition for each $n$ is
	\begin{equation}
		\vspace{-1mm}
		2S^{(0)}_q b_n^{(0)} P_n^{(0)}
		+
		\frac{1}{N}\sum_{\ell\neq 0}\Xi_{0,\ell}P_n^{(\ell)}
		-\lambda_m-\eta_n=0.
		\vspace{-1mm}
	\end{equation}
	Adding the upper-bound multiplier for \(P_n^{(0)}\le P_{\max}\)
	gives the same KKT stationarity condition with clipping at
	\(P_{\max}\). Together with complementary slackness, this yields
	\eqref{eq:MF_ramped_WF}.
\end{proof}

According to Proposition~\ref{prop:MF_BR}, a higher $b_n^{(0)}$, induced by a larger $\mu_{4,n}^{(0)}$, suppresses $P_n^{(0)}$ because AF sidelobes are more heavily penalized on that subcarrier. The multiplier $\lambda_m$ can be found by one-dimensional bisection.

\subsubsection{Optimal Power Allocation for RF}

The RF SINR denominator at BS-$0$ is given by
\begin{equation}
	\label{eq:RF_denominator_struct}
	\vspace{-1mm}
	\mathcal D^{\mathrm{RF}}\big(\mathbf P^{(0)};\{\mathbf P^{(\ell)}\}\big)
	=\sum_{n=0}^{N-1}\frac{\mu^{(0)}_{-2,n}}{P^{(0)}_n}
	\Big(\sigma_z^2+\sum_{\ell\ne 0}\Xi_{0,\ell}P^{(\ell)}_n\Big).
	\vspace{-1mm}
\end{equation}
Let us now define the nonnegative weights
\begin{equation}
	\label{eq:RF_w_n}
	\vspace{-1mm}
	w^{(0)}_n \ \triangleq\ \mu^{(0)}_{-2,n}
	\Big(\sigma_z^2+\sum_{\ell \ne 0}\Xi_{0,\ell}P^{(\ell)}_n\Big).
	\vspace{-1mm}
\end{equation}
Then, given $\{P^{(\ell)}_n\}_{\ell \ne 0}$, minimizing
$\sum_n \frac{w^{(0)}_n}{P^{(0)}_n}$ under
\eqref{eq:avg_power_constraint} is strictly convex and admits a
closed-form solution.

\begin{Pro}
	\label{prop:RF_BR}
	The optimal RF power allocation is
	\begin{equation}
		\label{eq:RF_sqrt_rule_Pmax}
		\vspace{-1mm}
		P_n^{(0)\star}
		=
		\left[
		c_{\mathrm{RF}}\sqrt{w_n^{(0)}}
		\right]_{+}^{P_{\max}},
		\quad \forall n,
		\vspace{-1mm}
	\end{equation}
	where \(c_{\mathrm{RF}}>0\) is chosen to satisfy
	$
		\sum_{n=0}^{N-1}
		P_n^{(0)\star}
		=
		NP_{\mathrm{ave}}^{(0)}.
	$
	If \(P_{\max}\) is inactive, then
	$
		P_n^{(0)\star}
		=
		\frac{N P_{\mathrm{ave}}^{(0)}\sqrt{w_n^{(0)}}}
		{\displaystyle\sum_{k=0}^{N-1}\sqrt{w_k^{(0)}}},
		\quad \forall n.
	$
\end{Pro}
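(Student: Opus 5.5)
Since the RF numerator $|\alpha_q^{(0)}|^2MN^2$ in Proposition~\ref{SensingSINRRF} does not depend on the power profile, maximizing $\mathrm{SINR}^{\mathrm{RF}}_q$ is equivalent to minimizing the denominator $\mathcal D^{\mathrm{RF}}=\sum_n w_n^{(0)}/P_n^{(0)}$ in \eqref{eq:RF_denominator_struct}. The constraints are $\sum_n P_n^{(0)}=NP_{\mathrm{ave}}^{(0)}$ and $0<P_n^{(0)}\le P_{\max}$; the full-band RF model forces $P_n^{(0)}>0$. The plan is a KKT argument mirroring the proof of Proposition~\ref{prop:MF_BR}. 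The ingredients are strict convexity of $1/x$ on $x>0$, Slater's condition (feasibility with slack), and monotonicity of the clipped allocation in the multiplier.

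First, record convexity and existence. For the active tones we have $w_n^{(0)}>0$, because $\mu_{-2,n}^{(0)}\ge 1$ by Jensen's inequality with $\mathbb E|s|^2=1$, and $\sigma_z^2>0$. Hence each term $w_n^{(0)}/P_n^{(0)}$ is strictly convex on $(0,\infty)$ and blows up as $P_n^{(0)}\to0^+$. The objective is therefore strictly convex, its minimizer lies in the interior with respect to the lower bound, and the minimizer is unique; in particular the $[\cdot]_+$ part of the clipping is never active. The problem is feasible whenever $P_{\mathrm{ave}}^{(0)}\le P_{\max}$, and strictly feasible (Slater) when this holds with strict inequality. KKT is then necessary and sufficient.

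Next, form the Lagrangian
\[
\mathcal L=\sum_n \frac{w_n^{(0)}}{P_n^{(0)}}+\lambda\Big(\sum_n P_n^{(0)}-NP_{\mathrm{ave}}^{(0)}\Big)+\sum_n\kappa_n\big(P_n^{(0)}-P_{\max}\big),
\]
with $\kappa_n\ge0$. Stationarity gives $w_n^{(0)}/(P_n^{(0)})^2=\lambda+\kappa_n$.
\begin{itemize}
\item On tones with $P_n^{(0)}<P_{\max}$, complementary slackness gives $\kappa_n=0$, so $P_n^{(0)}=\sqrt{w_n^{(0)}/\lambda}$. Since the left-hand side of stationarity is positive, $\lambda>0$.
\item On tones with $\kappa_n>0$, we have $P_n^{(0)}=P_{\max}$ and $\sqrt{w_n^{(0)}/\lambda}\ge P_{\max}$.
\end{itemize}
Setting $c_{\mathrm{RF}}=\lambda^{-1/2}$, the two cases combine into $P_n^{(0)\star}=\min\{P_{\max},c_{\mathrm{RF}}\sqrt{w_n^{(0)}}\}$, which is \eqref{eq:RF_sqrt_rule_Pmax}.

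The main point needing care is existence and uniqueness of $c_{\mathrm{RF}}$. The map $c\mapsto\sum_n\min\{P_{\max},c\sqrt{w_n^{(0)}}\}$ is continuous and nondecreasing. It equals $0$ at $c=0$ and reaches $NP_{\max}\ge NP_{\mathrm{ave}}^{(0)}$ for large $c$, so by the intermediate value theorem some $c$ meets the sum constraint. The map is strictly increasing while at least one tone is unclipped, which pins down the allocation uniquely, consistent with strict convexity; the degenerate case $P_{\mathrm{ave}}^{(0)}=P_{\max}$ gives the uniform allocation directly. In practice $c_{\mathrm{RF}}$ can be found by bisection.

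Finally, when $P_{\max}$ is inactive, all $\kappa_n=0$, so $P_n^{(0)}=c_{\mathrm{RF}}\sqrt{w_n^{(0)}}$. Summing gives $c_{\mathrm{RF}}=NP_{\mathrm{ave}}^{(0)}/\sum_k\sqrt{w_k^{(0)}}$, which is the stated closed form. As a cross-check, the Cauchy--Schwarz inequality
\[
\Big(\sum_n\sqrt{w_n^{(0)}}\Big)^2\le\Big(\sum_n \frac{w_n^{(0)}}{P_n^{(0)}}\Big)\Big(\sum_n P_n^{(0)}\Big)
\]
holds with equality exactly when $P_n^{(0)}\propto\sqrt{w_n^{(0)}}$. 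This confirms optimality directly in the unclipped case, with minimum denominator $\big(\sum_n\sqrt{w_n^{(0)}}\big)^2/(NP_{\mathrm{ave}}^{(0)})$.
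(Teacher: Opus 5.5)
Your proof is correct and follows the same route as the paper: a Lagrangian/KKT argument in which the divergence of $w_n^{(0)}/P_n^{(0)}$ at zero rules out the lower bound, stationarity gives $P_n^{(0)}=\sqrt{w_n^{(0)}/\lambda}$, and the upper-bound multiplier produces the clipping at $P_{\max}$. Your additional steps do not change the approach; they fill in details the paper leaves implicit, namely existence of $c_{\mathrm{RF}}$ via monotonicity and the intermediate value theorem, uniqueness via strict convexity, and the Cauchy--Schwarz check of the unclipped case.
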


\begin{proof}
	The Lagrangian is
	\[
	\vspace{-1mm}
	\mathcal L=\sum_n \frac{w^{(0)}_n}{P^{(0)}_n}
	+\lambda_m\Big(\sum_n P^{(0)}_n-N P_{\mathrm{ave}}^{(0)}\Big)
	-\sum_n \eta_n P^{(0)}_n.
	\vspace{-1mm}
	\]
	Stationarity yields
	$-\frac{w^{(0)}_n}{(P^{(0)}_n)^2}+\lambda_m-\eta_n=0$.
	At the optimum $P^{(0)}_n>0$ (otherwise the objective diverges),
	so $\eta_n=0$ and $P^{(0)}_n=\sqrt{\tfrac{w^{(0)}_n}{\lambda_m}}$.
	With the upper-bound KKT multiplier for \(P_n^{(0)}\le P_{\max}\),
	the unconstrained square-root solution is clipped at \(P_{\max}\).
	Equivalently,
	\(P_n^{(0)\star}=[c_{\rm RF}\sqrt{w_n^{(0)}}]_{+}^{P_{\max}}\),
	where \(c_{\rm RF}\) is chosen to satisfy the sum-power constraint.
\end{proof}

This square-root rule shows that RF compensates tones associated with higher
reciprocal-filter weights, in contrast to the MF interference-avoidance
behavior summarized in Table~\ref{tab:sensing_only_closed_form}.

\begin{table}[t]
	\centering
	\caption{Sensing-only optimal power allocation under the DD-domain
		denominator-minimization surrogate.}
	\vspace{0mm}
	\label{tab:sensing_only_closed_form}
	\scriptsize
	\setlength{\tabcolsep}{3.5pt}
	\renewcommand{\arraystretch}{1.35}
	\begin{tabular}{
			>{\centering\arraybackslash}m{0.55cm}
			>{\centering\arraybackslash}m{3.35cm}
			>{\raggedright\arraybackslash}m{3.95cm}
		}
		\toprule
		Filter & Optimal power allocation 
		$\displaystyle
		P_n^{(0)}$ & Design implication \\
		\midrule
		MF
		&
		$
		\displaystyle
		\left[
		\frac{
			\lambda_m-\frac{1}{N}\sum_{\ell\neq 0}\Xi_{0,\ell}P_n^{(\ell)}
		}{
			2S^{(0)}_q b_n^{(0)}
		}
		\right]_+
		$
		&
		\emph{Interference avoidance:} larger power-overlap interference
		leads to lower allocated power.
		\\
		\midrule
		RF
		&
		$
		\frac{
			NP_{\rm ave}^{(0)}\sqrt{w_n^{\rm RF}}
		}{
			\sum_{k=0}^{N-1}\sqrt{w_k^{\rm RF}}
		}
		$
		&
		\emph{Interference compensation:} larger RF vulnerability leads to
		higher allocated power.
		\\
		\bottomrule
		\vspace{0.25mm}
	\end{tabular}
	\parbox{\linewidth}{\footnotesize \textit{Note:}
		The table highlights the uncapped closed-form rules. With an active
		peak-power constraint, the same structures are clipped by
		\([x]_{+}^{P_{\max}}\).}
\end{table}

\subsection{Tractable Denominator-Based Formulation}

{{We next derive tractable denominator-based formulations. For MF, this
		corresponds to the dominant-mainlobe approximation adopted in
		Proposition~\ref{prop:MF_BR}, where the power-dependent variation of the denominator is
		optimized, while the coherent mainlobe term is treated as fixed under the
		sum-power constraint.}} For RF, maximizing the SINR is exactly equivalent to
minimizing its denominator for a fixed desired path gain. Using
$\gamma_n^{\rm c}=g_nP_n^{(0)}$, the minimum power required by the communication
mode $j$ on subcarrier $n$ is $
	P_{n,j}^{\min}\triangleq \frac{\Gamma_j}{g_n}.$ Let us now introduce the auxiliary variable
\begin{equation}
	\label{eq:kappa_nj_def}
	\vspace{-1mm}
	\kappa_{n,j}\triangleq u_{n,j}P_n^{(0)}.
	\vspace{-1mm}
\end{equation}
Then, under \eqref{P1:select},
$
	P_n^{(0)}=\sum_{j=0}^{J}\kappa_{n,j},
$
and
$
	\label{eq:mu4P2_persp}
	\mu_{4,n}^{(0)}\big(P_n^{(0)}\big)^2
	=
	\sum_{j=0}^{J}
	\mu_{4,j}\frac{\kappa_{n,j}^2}{u_{n,j}},
$
where the auxiliary term is interpreted in the closed-convex sense, with $\kappa_{n,j}=0$, when $u_{n,j}=0$.

Using \eqref{eq:MF_denominator_reduced} and the fixed sum-power constraint, the finite-alphabet MF denominator-minimization surrogate becomes
\begin{subequations}\label{prob:P_MF_QOL}
	\begin{align}
		\min_{\bm{U},\bm{\kappa}}\quad
		&
		\sum_{n=0}^{N-1}\sum_{j=0}^{J}
		a_{q,j}^{\mathrm{MF}}
		\frac{\kappa_{n,j}^2}{u_{n,j}}
		+
		\sum_{n=0}^{N-1}\sum_{j=0}^{J}
		\lambda_n^{\mathrm{MF}}\kappa_{n,j}
		\\
		\text{s.t.}\quad
		&
		\frac{1}{N}\sum\nolimits_{n=0}^{N-1}\sum\nolimits_{j=1}^{J}u_{n,j} r_j
		\ge R_{\min},
		\\
		&
		\frac{1}{N}\sum\nolimits_{n=0}^{N-1}\sum\nolimits_{j=0}^{J}\kappa_{n,j}
		= P_{\mathrm{ave}}^{(0)},
		\kappa_{n,j}\ge 0,
		\\
		&
		u_{n,j}P_{n,j}^{\min}
		\le \kappa_{n,j}
		\le u_{n,j}P_{\max},
		 \forall n,\ j,
		\\
		&
		\sum\nolimits_{j=0}^{J}u_{n,j}=1,
		u_{n,j}\in\{0,1\}, \forall n,\ j,
	\end{align}
\end{subequations}
where 
$
a_{q,j}^{\mathrm{MF}}
\triangleq
S_q^{(0)}
\frac{
	M\mu_{4,j}-\frac{\mu_{4,j}-1}{N}
}{NM-1}
$
and
$
\lambda_n^{\mathrm{MF}}
\triangleq
\frac{1}{N}\sum_{\ell\neq 0}\Xi_{0,\ell}P_n^{(\ell)}.
$
Problem \eqref{prob:P_MF_QOL} is a mixed-integer convex quadratic-over-linear reformulation of the MF denominator-minimization surrogate.

For RF, maximizing \eqref{eq:SINR_RF_final_DD} is equivalent to minimizing its denominator. Under \eqref{P1:select}, we have
\begin{equation}
	\label{eq:mu_minus2_over_P_RF}
	\vspace{-1mm}
	\frac{\mu_{-2,n}^{(0)}}{P_n^{(0)}}
	=
	\sum_{j=0}^{J}
	\mu_{-2,j}\frac{u_{n,j}^2}{\kappa_{n,j}} .
	\vspace{-1mm}
\end{equation}
Let us define
$
	c_{n,j}^{\mathrm{RF}}
	\triangleq
	\mu_{-2,j}
	\left(
	\sigma_z^2+
	\sum_{\ell=1}^{L-1}\Xi_{0,\ell}P_n^{(\ell)}
	\right).
$
Then the finite-alphabet RF denominator-minimization problem is
\begin{subequations}\label{prob:P_RF_QOL}
	\begin{align}
		\vspace{-1mm}
		\min_{\bm{U},\bm{\kappa}}\quad
		&
		\sum\nolimits_{n=0}^{N-1}\sum\nolimits_{j=0}^{J}
		c_{n,j}^{\mathrm{RF}}
		\frac{u_{n,j}^2}{\kappa_{n,j}}
		\\
		\text{s.t.}\quad
		&
		\frac{1}{N}\sum\nolimits_{n=0}^{N-1}\sum\nolimits_{j=1}^{J}u_{n,j} r_j
		\ge R_{\min},
		\\
		&
		\frac{1}{N}\sum\nolimits_{n=0}^{N-1}\sum\nolimits_{j=0}^{J}\kappa_{n,j}
		= P_{\mathrm{ave}}^{(0)},
		\kappa_{n,j}\ge 0,
		\\
		&
		u_{n,j}P_{n,j}^{\min}
		\le \kappa_{n,j}
		\le u_{n,j}P_{\max}, \forall n,\ j,
		\\
		&
		\sum\nolimits_{j=0}^{J}u_{n,j}=1,
		u_{n,j}\in\{0,1\}, \forall n,\ j.
	\end{align}
\end{subequations}
Problem \eqref{prob:P_RF_QOL} is a mixed-integer convex quadratic-over-linear reformulation of the RF denominator-minimization problem.

\subsection{Inter-Cell Spectrum-Overlap Coordination}
\label{sec:multi_bs_power}
{We next study inter-cell spectrum coordination through a homogeneous
	twin-BS benchmark. To isolate the effect of spectrum-overlap cardinality,
	we restrict attention to a BS-symmetric allocation subclass, in which
	the two BSs use identical powers on shared tones and symmetric power
	profiles on equally sized exclusive-tone sets. Under this benchmark,
	we adopt a worst-BS fairness criterion to reveal whether spectrum
	sharing or orthogonalization is preferred under the receiver-induced
	interference structure. For clarity, per-tone peak-power limits are omitted.}


\subsubsection{MF: Full-Band Sharing versus Spectrum Orthogonalization}
\label{subsec:MF_multiBS}

For MF, the dominant coherent mainlobe term
$\frac{M}{N}(\sum_n P_n^{(\ell)})^2$ is fixed by the per-BS
sum-power constraint. Spectral reshaping therefore mainly changes
the denominator of the SINR expression in (\ref{SINR_derived_expression}) through two terms: the AF sidelobes quadratic
penalty and the inter-cell power-overlap penalty. We thus study the
following min--max denominator surrogate for two co-channel BSs,
$\ell\in\{0,1\}$:
\begin{equation}
	\label{eq:minmax_den_def}
	\begin{aligned}
		\vspace{-1mm}
		\min_{\{P^{(0)}_n,P^{(1)}_n\ge 0\}}
		\quad & \max\{D_0,D_1\} \\
		\text{s.t.}\quad
		& \sum_{n=0}^{N-1}P^{(0)}_n=NP_{\mathrm{ave}},
		\sum_{n=0}^{N-1}P^{(1)}_n=NP_{\mathrm{ave}},
		\vspace{-1mm}
	\end{aligned}
\end{equation}
with
$
D_\ell \triangleq \sum_{n=0}^{N-1}
\Big(
S_q^{(\ell)}b_n^{(\ell)}(P^{(\ell)}_n)^2
+
\frac{1}{N}\Xi_{\ell,\bar \ell}P^{(\ell)}_n P^{(\bar \ell)}_n
\Big),
$
where $\bar \ell\neq \ell$, $S_q^{(\ell)}b_n^{(\ell)}>0$, and $\Xi_{\ell,\bar \ell}\ge 0$.

Assume the following homogeneous symmetric structure:
$
S_q^{(0)}=S_q^{(1)}\equiv S_q,
b_n^{(0)}=b_n^{(1)}\equiv b,
\Xi_{0,1}=\Xi_{1,0}\equiv \Xi,
$
with
$
b \triangleq
\frac{
	M\mu_4-\frac{\mu_4-1}{N}
}{NM-1},
$
where
$
\mu^{(0)}_{4,n}=\mu^{(1)}_{4,n}\equiv \mu_4.
$
Let $N_{\mathrm{ov}}$ denote the number of overlap tones and define
\[
\mathcal F \triangleq \{\,N_{\mathrm{ov}}\in\{0,1,\dots,N\}: N-N_{\mathrm{ov}}\ \text{is even}\,\}.
\]
For each $N_{\mathrm{ov}}\in\mathcal F$, the tones are partitioned into the shared set $\mathcal O$ with $|\mathcal O|=N_{\mathrm{ov}}$ and the exclusive sets $\mathcal E_0,\mathcal E_1$ with
$  
|\mathcal E_0|=|\mathcal E_1|=\frac{N-N_{\mathrm{ov}}}{2}.
$
\begin{Pro}
	\label{thm:twoBS_endpoint}
	Within the BS-symmetric overlap-cardinality subclass, the optimized MF
	denominator is minimized by
	$N_{\mathrm{ov}}^\star=\min\mathcal F$ if
	$\frac{\Xi}{N}>S_qb$, and by
	$N_{\mathrm{ov}}^\star=N$ if
	$\frac{\Xi}{N}<S_qb$.
\end{Pro}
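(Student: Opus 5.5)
For a fixed $N_{\mathrm{ov}}\in\mathcal F$, I will parametrize the BS-symmetric allocation by a common power $x\ge 0$ on every shared tone and a common power $y\ge 0$ on every exclusive tone of the transmitting BS. BS-$\ell$ is silent on $\mathcal E_{\bar\ell}$. Writing $E\triangleq (N-N_{\mathrm{ov}})/2$, the sum-power constraint becomes $N_{\mathrm{ov}}x+Ey=NP_{\mathrm{ave}}$. By symmetry $D_0=D_1$, so the min--max reduces to minimizing
\[
D(x,y)=S_qb\big(N_{\mathrm{ov}}x^2+Ey^2\big)+\frac{\Xi}{N}N_{\mathrm{ov}}x^2 .
\]
Restricting to equal powers within each set is justified by convexity: within the symmetric subclass, replacing the per-tone powers on $\mathcal O$ (resp.\ $\mathcal E_\ell$) by their average preserves the sum constraint and, by Jensen, does not increase $\sum (P_n)^2$ or $\sum P_n^{(0)}P_n^{(1)}=\sum_{\mathcal O}(P_n)^2$.

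Set $A\triangleq S_qb+\Xi/N$ and $B\triangleq S_qb$. Then $D=AN_{\mathrm{ov}}x^2+BEy^2$ subject to $N_{\mathrm{ov}}x+Ey=NP_{\mathrm{ave}}$. This is a weighted least-squares problem, and Cauchy--Schwarz (or a Lagrange multiplier) gives $x\propto 1/A$, $y\propto 1/B$ and
\[
D^\star(N_{\mathrm{ov}})=\frac{(NP_{\mathrm{ave}})^2}{N_{\mathrm{ov}}/A+E/B}
=\frac{(NP_{\mathrm{ave}})^2}{N_{\mathrm{ov}}\big(\tfrac1A-\tfrac1{2B}\big)+\tfrac{N}{2B}} .
\]
The boundary cases $N_{\mathrm{ov}}=0$ and $N_{\mathrm{ov}}=N$ are handled by the same formula, with the absent set removed.

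The denominator is affine in $N_{\mathrm{ov}}$, so $D^\star$ is monotone in $N_{\mathrm{ov}}$ and the optimum lies at an endpoint of $\mathcal F$. The sign of the slope $\frac1A-\frac1{2B}$ is the sign of $2B-A=S_qb-\Xi/N$.
\begin{itemize}
\item If $\Xi/N<S_qb$, the slope is positive, $D^\star$ is decreasing in $N_{\mathrm{ov}}$, and the optimum is $N_{\mathrm{ov}}^\star=N$. Note that $N\in\mathcal F$ because $N-N=0$ is even.
\item If $\Xi/N>S_qb$, $D^\star$ is increasing in $N_{\mathrm{ov}}$, and the optimum is $N_{\mathrm{ov}}^\star=\min\mathcal F$, which is $0$ or $1$ depending on the parity of $N$.
\end{itemize}

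The main care point is justifying the equal-power reduction inside the symmetric subclass, in particular that any nonzero power is permitted on shared tones and that the exclusive-set profiles may be equalized. I expect Jensen's inequality to handle this cleanly, since both penalty terms are convex quadratic in the per-tone powers and the constraint is linear.
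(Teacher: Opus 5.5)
Your proposal is correct and follows the paper's proof in Appendix~C essentially step for step. Both arguments reduce to equal powers $x,y$ on the shared and exclusive sets via Jensen, solve the weighted least-squares problem to get $x/y=S_qb/(S_qb+\Xi/N)$ and $D^\star=N^2P_{\mathrm{ave}}^2/\Phi(N_{\mathrm{ov}})$ with $\Phi$ affine, and read off the endpoint from the sign of $S_qb-\Xi/N$. Your explicit checks of the boundary cases and of $N\in\mathcal F$ are minor additions.
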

\begin{proof}
	Please refer to Appendix C.
\end{proof}

\begin{remark}
	{Proposition~\ref{thm:twoBS_endpoint} reveals that increasing the overlap
		introduces additional inter-cell interference, but also allows each BS
		to spread its fixed transmit power over more tones, thereby reducing
		the quadratic self-sidelobe penalty. When $\Xi/N<S_qb$, this reduction
		dominates the added overlap interference, and full-band sharing yields
		a lower total DD-domain denominator.}
\end{remark}

\subsubsection{Masked RF: Minimum-Overlap Tendency}
\label{subsec:RF_multiBS}

{{The preceding RF SINR and power-allocation results are derived for the full-band RF implementation, where all sensing tones remain active. To additionally characterize the effect of hard spectrum coordination, we now consider a masked-RF extension in this section. In this variant, a BS applies the reciprocal weight only on its active sensing tones and sets the RF weight to zero on inactive tones.}} Consequently, the RF mainlobe gain and denominator are both evaluated over the active sensing set, rather than over all $N$ subcarriers.\footnote{{Additional range sidelobes or ambiguity peaks induced by the active-tone pattern in masked RF are not considered and can be mitigated through tone-pattern design or weighted processing.}}

Using the same symmetric partition of $\{\mathcal O,\mathcal E_0,\mathcal E_1\}$
defined above, each BS is active on
$|\mathcal O\cup\mathcal E_\ell|=(N+N_{\mathrm{ov}})/2$ tones.
Under the per-BS sum-power constraint
$
\sum_{n\in \mathcal O\cup \mathcal E_\ell} P_n^{(\ell)} = N P_{\mathrm{ave}},
 \ell\in\{0,1\},
$
the RF mainlobe gain on the matched DD bin is
$
\frac{M}{N}\Big(\frac{N+N_{\mathrm{ov}}}{2}\Big)^2.
$
Therefore, in contrast to MF, the RF numerator does depend on the overlap cardinality, but only through $N_{\mathrm{ov}}$, rather than the specific tone indices.

Under the homogeneous abstraction, for fixed $N_{\mathrm{ov}}$ we adopt the following min--max RF-denominator surrogate:
\begin{equation}
	\label{eq:RF_minmax_den_def}
	\begin{aligned}
		\vspace{-1mm}
		\min_{\{P_n^{(0)},P_n^{(1)}\ge 0\}}
		\quad & \max\{D_0,D_1\} \\
		\text{s.t.}\quad
		& \sum_{n\in \mathcal O\cup \mathcal E_0} P_n^{(0)} = N P_{\mathrm{ave}},\\
		& \sum_{n\in \mathcal O\cup \mathcal E_1} P_n^{(1)} = N P_{\mathrm{ave}},\\
		& P_n^{(0)}=0,\ \forall n\in \mathcal E_1,
		P_n^{(1)}=0,\ \forall n\in \mathcal E_0,
		\vspace{-1mm}
	\end{aligned}
\end{equation}
with
$
D_\ell \triangleq
\sum_{n\in \mathcal O\cup \mathcal E_\ell}
\frac{\mu_{-2,n}^{(\ell)}\sigma_z^2}{P_n^{(\ell)}}
+
\sum_{n\in \mathcal O}
\mu_{-2,n}^{(\ell)}\Xi_{\ell,\bar \ell}\frac{P_n^{(\bar \ell)}}{P_n^{(\ell)}},
$
where $\bar \ell\neq \ell$, $\mu_{-2,n}^{(\ell)}\sigma_z^2>0$, and $\Xi_{\ell,\bar \ell}\ge 0$.

Then, we assume the following homogeneous symmetric structure:
$
\mu_{-2,n}^{(0)}=\mu_{-2,n}^{(1)}\equiv \mu_{-2},
\Xi_{0,1}=\Xi_{1,0}\equiv \Xi, \forall n,
$
with $\mu_{-2}\sigma_z^2>0$ and $\Xi\ge 0$.
For RF, activating fewer subcarriers reduces not only the interference and noise terms, but also the coherent mainlobe gain of the effective channel. Specifically, if each BS is active on
$
K \triangleq \frac{N+N_{\mathrm{ov}}}{2}
$
subcarriers, then the coherently combined target-signal power scales as $K^2$. Therefore, for RF, the overlap pattern must be optimized with respect to the full SINR expression, rather than only its denominator.

\begin{Pro}
	\label{thm:RF_twoBS_endpoint}
	For masked RF with $\Xi>0$, the max-min worst-case RF SINR is maximized
	by $
		N_{\mathrm{ov}}^\star=\min\mathcal F,$
	where $N_{\mathrm{ov}}^\star=0$ for even $N$ and
	$N_{\mathrm{ov}}^\star=1$ for odd $N$. 
\end{Pro}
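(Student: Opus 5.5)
Within the BS-symmetric subclass of Section~\ref{sec:multi_bs_power}, the plan is to compute the optimal worst-BS RF SINR in closed form for each fixed $N_{\mathrm{ov}}\in\mathcal F$, and then show that this value is monotonically decreasing in $N_{\mathrm{ov}}$. Throughout, write $K=(N+N_{\mathrm{ov}})/2$, so that $N_{\mathrm{ov}}=2K-N$. As $N_{\mathrm{ov}}$ ranges over $\mathcal F$, $K$ ranges over integers in $[\lceil N/2\rceil, N]$.

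\emph{Step 1 (fixed overlap: the interference term).} Under the subclass restriction, both BSs use identical powers on every shared tone, so $P_n^{(\bar\ell)}/P_n^{(\ell)}=1$ for all $n\in\mathcal O$. The inter-cell term of $D_\ell$ then collapses to the allocation-independent constant $\mu_{-2}\Xi N_{\mathrm{ov}}$. The profiles on $\mathcal E_0$ and $\mathcal E_1$ are symmetric, which gives $D_0=D_1$, so the min--max is just a minimization of
$$D=\mu_{-2}\sigma_z^2\sum_{n\in\mathcal O\cup\mathcal E_\ell}\frac{1}{P_n}+\mu_{-2}\Xi N_{\mathrm{ov}}.$$

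\emph{Step 2 (fixed overlap: the noise term).} Cauchy--Schwarz (equivalently, Proposition~\ref{prop:RF_BR} with equal weights $w_n$) gives
$$\sum_{n}\frac{1}{P_n}\ge\frac{K^2}{\sum_n P_n}=\frac{K^2}{NP_{\mathrm{ave}}},$$
with equality for uniform power $P_n=NP_{\mathrm{ave}}/K$. Uniform power lies in the subclass: it is identical on shared tones and symmetric on the exclusive sets. Hence
$$D^\star(K)=\mu_{-2}\sigma_z^2\frac{K^2}{NP_{\mathrm{ave}}}+\mu_{-2}\Xi(2K-N).$$

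\emph{Step 3 (optimizing the overlap).} Combine Step 2 with the mainlobe gain $\frac{M}{N}K^2$ stated before the proposition (times $|\alpha_q|^2$). The optimal worst-case SINR is then
$$\mathrm{SINR}^\star(K)=\frac{|\alpha_q|^2 M/N}{\mu_{-2}\sigma_z^2/(NP_{\mathrm{ave}})+\mu_{-2}\Xi\, f(K)},\qquad f(K)\triangleq\frac{2K-N}{K^2}.$$
Treating $K$ as continuous,
$$f'(K)=\frac{2K^2-2K(2K-N)}{K^4}=\frac{2(N-K)}{K^3}\ge 0\quad\text{on }[N/2,N],$$
with strict inequality for $K<N$. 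Because $\Xi>0$, $\mathrm{SINR}^\star$ is therefore strictly decreasing in $K$, hence in $N_{\mathrm{ov}}$, over $\mathcal F$. The maximizer is the smallest admissible $K$, i.e.\ $N_{\mathrm{ov}}^\star=\min\mathcal F$. This equals $0$ for even $N$ and $1$ for odd $N$, by the parity requirement in the definition of $\mathcal F$.

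\emph{Main obstacle.} The delicate point is Step 1: the ratio-type interference term is only allocation-independent because of the subclass restriction. Outside the subclass, asymmetric shared-tone powers could trade one BS's ratio against the other's, and I would not try to rule that out. I would also check that uniform power is compatible with the exclusive-set symmetry and that no per-tone cap is active; the latter holds because peak-power limits are omitted in this section. A secondary subtlety is that the numerator depends on $K$, so the denominator alone cannot be compared. This is why the argument works with the full ratio through $f(K)$, whose monotonicity is the key calculation.
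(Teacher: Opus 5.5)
Your proof is correct under the BS-symmetric subclass that Section~\ref{sec:multi_bs_power} imposes. From Step 2 onward it coincides with the paper's proof: the Jensen/Cauchy--Schwarz bound $K^2/(NP_{\mathrm{ave}})$ is attained by uniform power, and the overlap is then optimized through monotonicity of $N_{\mathrm{ov}}/K^2$. Your $f(K)=(2K-N)/K^2$ is the paper's $4x/(N+x)^2$ after the substitution $x=2K-N$, and both derivatives give the same sign.

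The only real difference is Step 1. You make the ratio term constant by forcing identical shared-tone powers, which also makes $D_0=D_1$ and so renders the worst-BS criterion vacuous. The paper instead bounds the ratio term for arbitrary allocations on the given partition. Write $r_n=P_n^{(1)}/P_n^{(0)}$ and $R_0=\sum_{n\in\mathcal O}r_n$, $R_1=\sum_{n\in\mathcal O}r_n^{-1}$. Then
\[
\max\{R_0,R_1\}\ \ge\ \tfrac12(R_0+R_1)\ =\ \tfrac12\sum_{n\in\mathcal O}\big(r_n+r_n^{-1}\big)\ \ge\ N_{\mathrm{ov}} .
\]
The noise lower bound holds for each BS separately with the same constant. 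Combining the two gives
\[
\max\{D_0,D_1\}\ \ge\ \mu_{-2}\sigma_z^2\frac{K^2}{NP_{\mathrm{ave}}}+\mu_{-2}\Xi N_{\mathrm{ov}}
\]
for every feasible allocation, with equality at the uniform symmetric allocation.

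So the obstacle you chose not to address is removed by one AM--GM line: trading ratios between the two BSs can only raise the worse of $R_0,R_1$. What your version buys is simplicity, and it is sufficient for the proposition as framed by the section's subclass restriction. What the paper's version buys is that the uniform symmetric allocation is optimal for the full min--max problem \eqref{eq:RF_minmax_den_def} at each fixed partition, not just within the subclass. In that version the worst-BS criterion actually does work in the argument.
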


\begin{proof}
	Please refer to Appendix D.
\end{proof}

\begin{remark}
	In this symmetric overlap model, MF and masked RF lead to different
	coordination principles. MF exhibits a thresholding behavior: spectral
	orthogonalization is preferred only when the normalized inter-cell
	coupling of $\Xi/N$ is higher than the AF sidelobe coefficient $S_qb$;
	otherwise, full-band overlap can be more efficient. By contrast,
	masked RF favours the minimum feasible overlap whenever inter-cell
	coupling is present, because each shared active tone introduces a
	ratio-type interference penalty. 
\end{remark}

\section{Beyond-CP Inter-Cell Leakage and Effective Spectrum}
\label{sec:prop_delay_interference}

The preceding sections assume that the effective inter-cell delays remain
within the CP, so that the interference from BS-$\ell$ is separable across
subcarriers and described by its nominal spectrum $\{P_n^{(\ell)}\}$.
In dense OFDM-ISAC networks, however, some inter-cell components may exceed
the CP and redistribute the interfering power across fast Fourier transform (FFT)  bins. This section
captures this effect through a delay-distorted effective interference
spectrum and shows how the previous MF/RF interference terms should be
modified. To isolate propagation-delay effects, we neglect Doppler and carrier-frequency offset (CFO) on
the quasi-static inter-cell coupling links.

\subsection{Effective Interference Spectrum under CP Violation}

For a path impinging from BS-$\ell$ to BS-0 with delay $\tau$, we define the
per-subcarrier effective interference power after CP removal and FFT as
\begin{equation}
	\label{eq:eta_delay_def}
	\vspace{-1mm}
	\eta_n^{(\ell)}(\tau)
	\triangleq
	\mathbb E\!\left[
	\left|
	\big[
	\mathcal U^{\mathrm f}_{\tau,0}
	(\mathbf X_m^{(\ell)},\mathbf X_{m-1}^{(\ell)})
	\big]_n
	\right|^2
	\right].
	\vspace{-1mm}
\end{equation}
If $0\le \tau\le N_{\mathrm{CP}}$, then
$
	\eta_n^{(\ell)}(\tau)=P_n^{(\ell)} .
$
If $\tau>N_{\mathrm{CP}}$, let us define the integer excess delay
$d(\tau)\triangleq \tau-N_{\mathrm{CP}}$ and assume
$d(\tau)\in\{1,\ldots,N-1\}$, i.e., at most one-symbol ISI is
induced.\footnote{Here, $\tau_q^{(\ell,0)}$ is the effective inter-cell delay measured relative to
	the FFT window of BS-0. The inter-cell delay-power profile
	$\{|\alpha_q^{(\ell,0)}|^2,\tau_q^{(\ell,0)}\}$ is treated as slowly varying
	coupling information, which can be obtained from reference-signal measurements, or long-term interference
	measurements.
}
Then
$\eta_n^{(\ell)}(\tau)
		=
		\sum_{r=0}^{N-1} P_r^{(\ell)}
		\Big(
		\big|[\mathbf F_N\mathbf C_0(d(\tau))\mathbf F_N^H]_{n,r}\big|^2
		+
		\big|[\mathbf F_N\mathbf C_1(d(\tau))\mathbf F_N^H]_{n,r}\big|^2
		\Big).$
Thus, CP `violation' maps the transmitted power on subcarrier $r$ to all
receive subcarriers through the delay-induced coupling matrices.

Consistent with the uncorrelated-path model used in the DD-domain SINR
analysis, the path-level interference powers are added noncoherently.
For BS-$\ell$, the effective interference spectrum at BS-0 is formulated as
\begin{equation}
	\label{eq:Peff_delay_def}
	\vspace{-1mm}
	P_{\mathrm{eff}}^{(\ell)}[n]
	\triangleq
	\sum\nolimits_{q=0}^{Q_\ell}
	|\alpha_q^{(\ell,0)}|^2\,
	\eta_n^{(\ell)}\!\big(\tau_q^{(\ell,0)}\big).
	\vspace{-1mm}
\end{equation}
By construction, when all inter-cell paths satisfy $\tau_q^{(\ell,0)}\le N_{\mathrm{CP}}$, \eqref{eq:Peff_delay_def} reduces to
\begin{equation}
	\label{eq:Peff_delay_withinCP}
	\vspace{-1mm}
	P_{\mathrm{eff}}^{(\ell)}[n]
	=
	\Big(\sum\nolimits_{q=0}^{Q_\ell}|\alpha_q^{(\ell,0)}|^2\Big) P_n^{(\ell)},
	\vspace{-1mm}
\end{equation}
which leads to the within-CP delay model.

Moreover, since $\mathbf F_N$ is unitary and
$\mathbf C_0(d),\mathbf C_1(d)$ partition the current/previous-symbol
contributions, the total average interference power is preserved:
\begin{equation}
	\label{eq:Peff_power_conservation}
	\vspace{-1mm}
	\sum\nolimits_{n=0}^{N-1}P_{\mathrm{eff}}^{(\ell)}[n]
	=
	\left(\sum\nolimits_{q=0}^{Q_\ell}|\alpha_q^{(\ell,0)}|^2\right)
	\sum\nolimits_{r=0}^{N-1}P_r^{(\ell)}.
	\vspace{-1mm}
\end{equation}
Thus, for a given aggregate inter-cell path gain, beyond-CP delays do not
increase the total interference energy. They only redistribute it across
subcarriers.

\subsection{Implications for Spectrum Coordination}

Upon replacing the nominal inter-cell spectrum by
$P_{\mathrm{eff}}^{(\ell)}[n]$, the MF multi-cell EISL term becomes
\begin{equation}
	\label{eq:EISL_MF_delay}
	\vspace{-1mm}
	\mathrm{EISL}^{\mathrm{MF}}_{0,\mathrm{multi}}
	=
	\frac{1}{N}\sum_{\ell=1}^{L-1}\sum_{n=0}^{N-1}
	P_n^{(0)}P_{\mathrm{eff}}^{(\ell)}[n].
	\vspace{-1mm}
\end{equation}
Under RF, the scaled denominator term in (\ref{eq:SINR_RF_final_DD}) becomes
\begin{equation}
	\label{eq:RF_den_delay}
	\vspace{-1mm}
	\mathcal D^{\mathrm{RF}}
	=
	\sum_{n=0}^{N-1}
	\frac{\mu_{-2,n}^{(0)}}{P_n^{(0)}}
	\Big(
	\sigma_z^2+\sum_{\ell=1}^{L-1}P_{\mathrm{eff}}^{(\ell)}[n]
	\Big).
	\vspace{-1mm}
\end{equation}
Thus, RF sees the same effective spectrum, further weighted by the
reciprocal-filter amplification factor $\mu_{-2,n}^{(0)}/P_n^{(0)}$.

The above expressions imply a simple replacement principle: the MF and
RF power-allocation structures derived under the within-CP delay model remain
valid after replacing the nominal inter-cell spectrum by the effective
spectrum $P_{\mathrm{eff}}^{(\ell)}[n]$. Hence, CP violation does not
change the qualitative MF/RF design laws. MF still avoids tones associated with
high effective overlap, whereas RF still compensates tones having high
delay-aware reciprocal-filter weights.

This replacement also clarifies the robustness of spectrum coordination.
Within the CP, nominal spectral orthogonalization can suppress overlap.
Beyond the CP, however, a nominally disjoint interferer can still leak
power into protected tones, so strict orthogonalization becomes less
effective. For a flat full-band interferer, the effective spectrum remains
flat; hence beyond-CP leakage mainly matters for frequency-division multiplexing (FDM), partial-band
activation, or strongly shaped spectra.

\subsection{Delay-Induced Leakage Kernel and Bound}
\label{subsec:single_tone_leakage}

We finally quantify the severity of delay-induced spectral redistribution
using a single-tone interferer, which characterizes the elementary leakage response underlying the effective interference spectrum. Consider a single beyond-CP path from BS-$\ell$
to BS-0 with integer excess delay
$
d=d(\tau)\in \{1,\ldots,N-1\}.
$
Assume that BS-$\ell$ transmits only on subcarrier $r$, i.e.,
\begin{equation}
	\label{eq:single_tone_input}
	\vspace{-1mm}
	X_m^{(\ell)}[n]
	=
	\sqrt{P} s_m \delta[n-r],
	\vspace{-1mm}
\end{equation}
where $\{s_m\}$ are zero-mean and independent across OFDM symbols, and $\mathbb E[|s_m|^2]=1$. Using
the beyond-CP FD operator, the FFT-domain contribution on
receive subcarrier $n$ is formulated as:
\begin{equation}
	\label{eq:single_tone_output}
	Y_m[n]
	=
	\sqrt{P} [\mathbf G_0(d)]_{n,r}s_m
	+
	\sqrt{P} [\mathbf G_1(d)]_{n,r}s_{m-1}.
\end{equation}
Since $s_m$ and $s_{m-1}$ are independent, the normalized leakage kernel is
\begin{equation}
	\label{eq:single_tone_kernel_def}
	H_d[\Delta]
	\triangleq
	|[\mathbf G_0(d)]_{n,r}|^2
	+
	|[\mathbf G_1(d)]_{n,r}|^2,
	\Delta\equiv n-r \pmod N,
\end{equation}
so that
$
\mathbb E[|Y_m[n]|^2]=P H_d[\Delta].
$
For the nominal FFT bin $\Delta=0$, direct evaluation gives
$
	H_d[0]
	=
	\frac{(N-d)^2+d^2}{N^2}.$
For $\Delta\neq 0$, the off-bin leakage is expressed as
$
	H_d[\Delta]
	=
	\frac{2}{N^2}
	\frac{\sin^2 \big(\pi d\Delta/N\big)}
	{\sin^2 \big(\pi \Delta/N\big)},
	\Delta\neq 0 .$
Since the total power is conserved, we have $\sum_{\Delta=0}^{N-1}H_d[\Delta]=1$.
Therefore, the fraction of single-tone power leaked from the nominal
FFT bin is
\begin{equation}
	\label{eq:Lambda_d_def}
	\vspace{-1mm}
	\Lambda(d)
	\triangleq
	1-H_d[0]
	=
	\frac{2d(N-d)}{N^2}
	\le \frac{1}{2}.
	\vspace{-1mm}
\end{equation}
Thus, under one-symbol CP violation, at most half of the power of a single
interfering tone can leave its nominal FFT bin. The excess delay $d$ does
not determine a finite number of affected subcarriers; instead, it controls
the Dirichlet-squared leakage shape, the null locations, and the total
leaked power fraction. This characterizes the leakage underlying the effective interference spectrum and explains why nominal spectral
orthogonalization can be degraded by beyond-CP delay-induced inter-cell leakage.

\section{Simulations}

Unless otherwise specified, the simulations consider an OFDM-ISAC network with \(N=64\) subcarriers and \(M=16\) OFDM symbols. The noise variance is set to \(\sigma_z^2=0.01\). For the adaptive constellation and power-control designs, the average and peak transmit powers of BS-0 are \(P_{\rm ave}^{(0)}=8\) and \(P_{\max}=16\), respectively. Each subcarrier selects a specific mode from a sensing-only waveform and seven communication constellations, namely QPSK, 8APSK, 16QAM, 16PSK, 32APSK, 64QAM, and 256QAM. 

\begin{figure}[t]
	\centering
	\includegraphics[width=7.5cm]{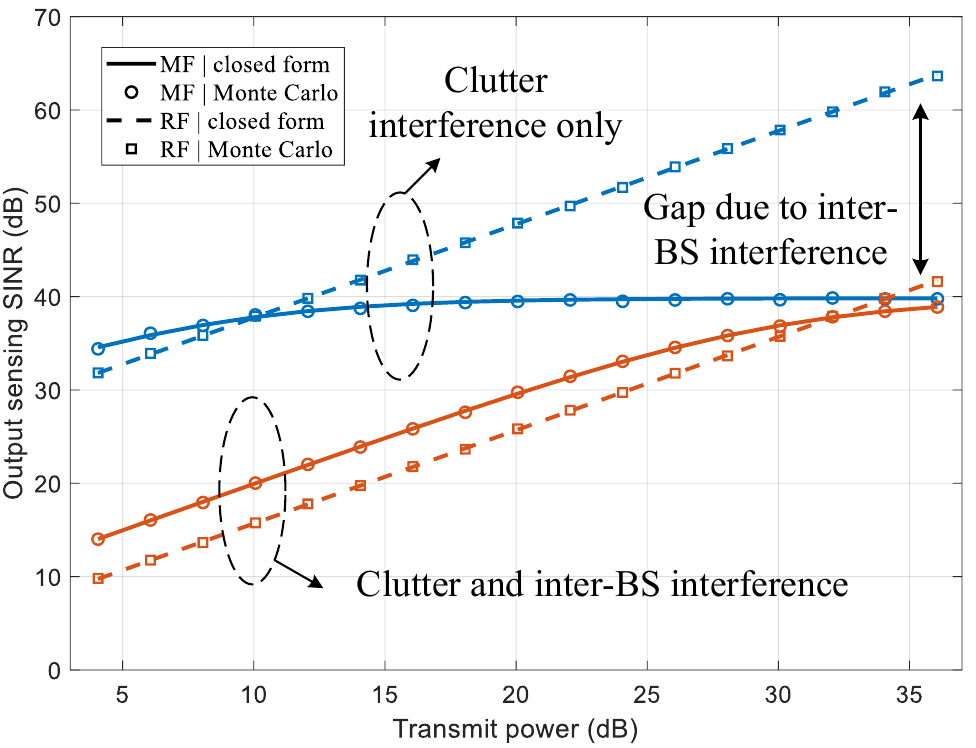}
	\vspace{-3mm}
	\caption{Analytical and Monte Carlo sensing SINR versus the average
	transmit power $P_{\rm ave}^{(0)}$ of BS-0.}
	\label{figure5}
\end{figure}

Fig.~\ref{figure5} validates the proposed SINR analysis by comparing the results from closed-form expressions to those of Monte Carlo simulations. The analytical curves and simulation markers closely match for both MF and RF, under both clutter-only and clutter-plus-inter-cell-interference scenarios. The results also show that inter-cell interference causes a pronounced SINR degradation, demonstrating that multi-cell coupling cannot be treated as negligible background noise. Moreover, MF and RF exhibit different power-scaling trends: MF is mainly limited by the clutter floor, whereas RF can achieve a steeper SINR increase when the dominant impairment is filtered noise. This confirms the need for receive-filter-aware interference management in multi-cell OFDM-ISAC.

\begin{figure}[t]
	\centering
	\includegraphics[width=7.5cm]{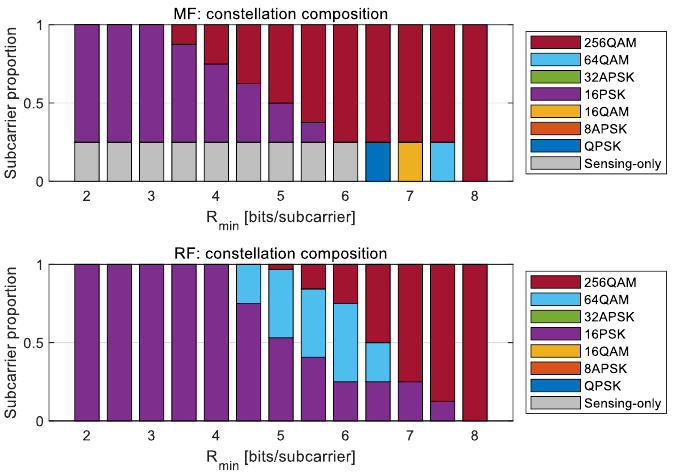}
	\vspace{-3mm}
	\caption{Constellation selection versus $R_{\min}$ constraints for MF and RF.}
	\label{figure8}
\end{figure}

The constellation-composition results reveal the impact of the communication requirement on the proposed BER-aware OFDM-ISAC resource allocation. As $R_{\min}$ increases, both MF and RF progressively shift the selected subcarriers from low-order constellations, mainly QPSK, to higher-order constellations such as 64QAM and 256QAM, which reflects the increasing communication payload pressure. Under MF, the mode allocation exhibits a more diversified mixture of QPSK, 8APSK, 16QAM, 64QAM, and 256QAM over the medium-payload region, indicating that MF can flexibly exploit amplitude-varying constellations, while balancing sensing interference and BER-induced power constraints. By contrast, RF tends to favor constant-envelope or more RF-compatible constellations, especially QPSK and 16PSK, over a wider range of moderate $R_{\min}$, before eventually switching to 256QAM when the payload constraint becomes stringent. This behavior is consistent with the RF structure, where the inverse-symbol weighting is more sensitive to low constellation amplitudes and therefore implicitly penalizes some amplitude-varying constellations. 

\begin{figure}[t]
	\centering
	\includegraphics[width=7.5cm]{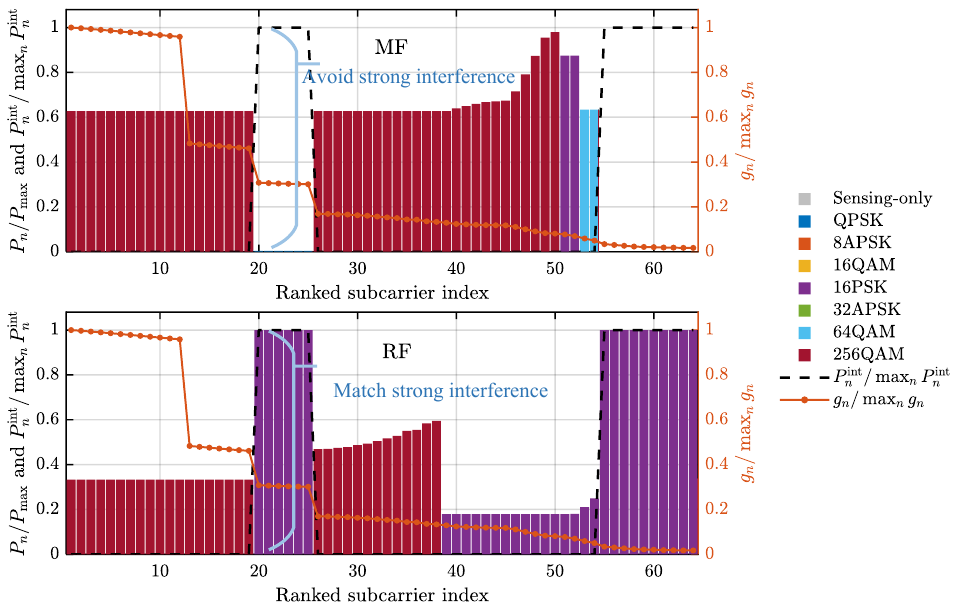}
	\vspace{-3mm}
	\caption{Optimized power and constellation allocation across subcarriers
		under $R_{\min}=6$ bits/channel use.}
	\label{figure10}
\end{figure}

Fig. \ref{figure10} shows the optimized constellation and power allocation under $R_{\min}=6$, with subcarriers sorted by the communication channel gain $g_n$. A clear filter-dependent behavior is observed. For MF, the allocated power tends to avoid subcarriers with strong inter-cell interference, and several highly interfered tones are assigned little or no power. This is consistent with the MF interference term $\sum_n P_n^{(0)}P_n^{\mathrm{int}}$, where reducing spectral overlap directly lowers the sensing interference floor. By contrast, RF allocates power in a way that more closely follows the interference profile. Since the RF denominator contains ratio-type terms such as $P_n^{\mathrm{int}}/P_n^{(0)}$, assigning more power to strongly interfered or vulnerable tones can reduce reciprocal-filter amplification. Thus, MF prefers interference avoidance, whereas RF tends to balance its power against the interference level.

\begin{figure}[t]
	\centering
	\includegraphics[width=7cm]{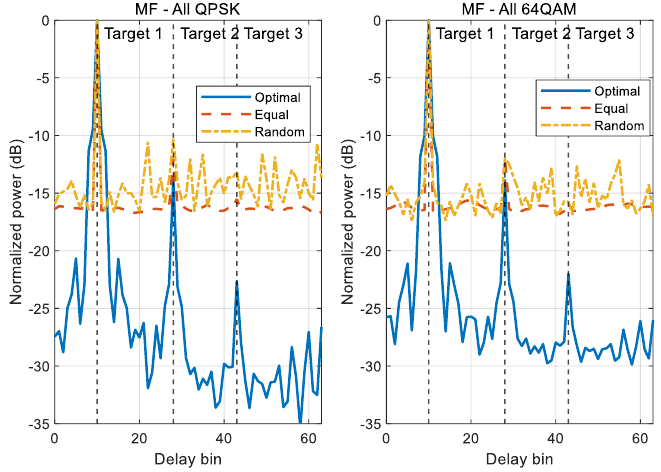}
	\vspace{-3mm}
	\caption{MF sensing profiles with optimized power allocation}
	\label{figure6a}
\end{figure}

\begin{figure}[t]
	\centering
	\includegraphics[width=7cm]{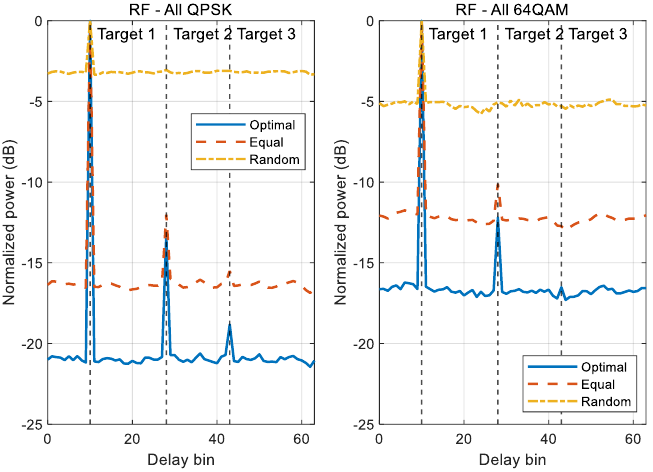}
	\vspace{-3mm}
	\caption{RF sensing profiles with optimized power allocation.}
	\label{figure6b}
\end{figure}

Fig.~\ref{figure6a} and Fig.~\ref{figure6b} compare the delay-domain sensing profiles of MF and RF under all-QPSK and all-64QAM signaling. The proposed optimal power allocation is benchmarked against equal-power and random-power baselines, and the vertical dashed lines mark the true delay bins of the three targets. The optimized allocation significantly lowers the sidelobe/interference floor for both receivers and yields clearer peaks at the target delays, especially for the weaker targets. 
Under MF, the improvement accrues from reducing the power-overlap-induced interference and the data-dependent sidelobe floor, which is consistent with the MF SINR in (\ref{SINR_derived_expression}) involving $\mu_{4,n}^{(0)}$ and $\sum_n P_n^{(0)}P_n^{(\ell)}$. Under RF, the all-QPSK case achieves a cleaner profile than all-64QAM because QPSK has a more favorable inverse-symbol-power behavior, whereas 64QAM is more sensitive to reciprocal-filter amplification through $\mu_{-2,n}^{(0)}$ of equation (\ref{eq:SINR_RF_final_DD}). 
These results demonstrate that receive filtering, constellation format, and power allocation must be jointly designed for robust weak-target detection in multi-cell OFDM-ISAC systems.

\begin{figure}[t]
	\centering
	\includegraphics[width=7cm]{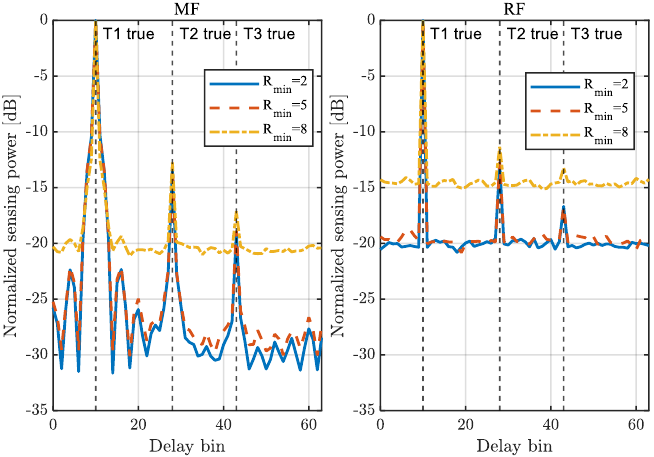}
	\vspace{-3mm}
	\caption{Impact of communication payload requirements on sensing profiles.}
	\label{figure7}
\end{figure}
{Fig.~\ref{figure7} compares the delay-domain sensing profiles obtained by
	jointly optimizing constellation selection and power allocation under
	different communication payload requirements.} For both MF and RF filters, the strongest target can be reliably identified under all $R_{\min}$ values considered. However, increasing $R_{\min}$ raises the residual sidelobe/interference floor and reduces the relative visibility of the weaker targets. This demonstrates that stricter communication QoS constraints limit the degrees of freedom available for sensing-oriented constellation and power optimization.
For MF, the degradation appears as a gradual increase in the sidelobe floor, which is consistent with the dependence of the MF EISL on the fourth-order constellation moment and the power-overlap term in (\ref{SINR_derived_expression}). For RF, the high-$R_{\min}$ regime leads to a more pronounced increase in the sidelobe and interference floor because RF is sensitive to inverse-symbol-power moments and reciprocal-filter amplification. These results confirm the proposed sensing--communication tradeoff: higher communication payload requirements may be achieved at the cost of reduced sensing dynamic range and weaker target detectability.

\begin{figure}[t]
	\centering
	\includegraphics[width=7cm]{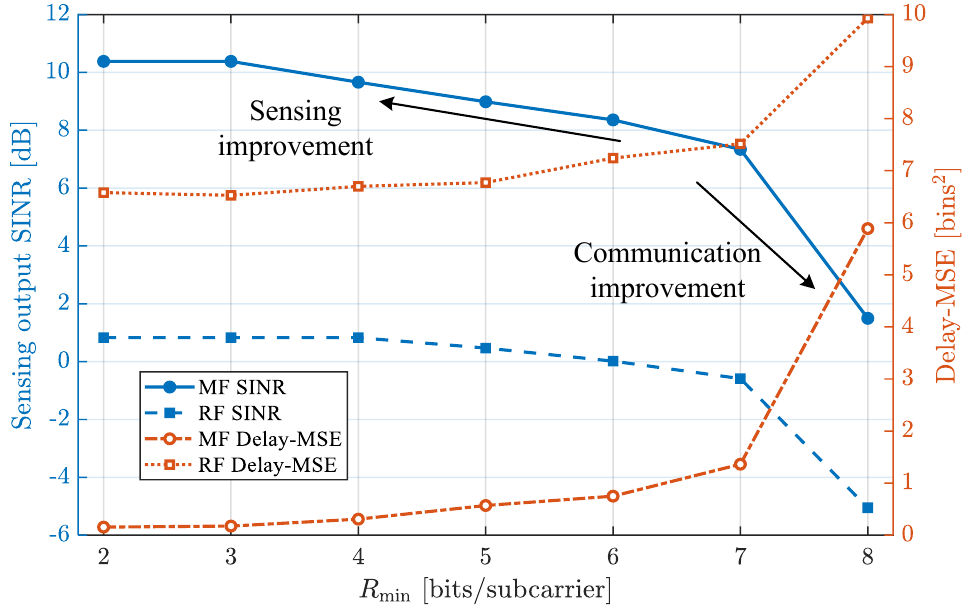}
	\vspace{-3mm}
	\caption{Sensing SINR and MSE versus communication payload requirement.}
	\label{figure9}
\end{figure}
The sensing-SINR results in Fig. \ref{figure9} illustrate the fundamental tradeoff between communication and sensing performance under the proposed BER-aware constellation and power allocation design. As the minimum bits per channel use requirement $R_{\min}$ increases, more subcarriers must employ higher-order constellations and satisfy stricter BER-induced power requirements, which gradually reduces the degrees of freedom available for sensing-oriented waveform shaping. Consequently, the sensing SINR of all targets decreases with $R_{\min}$, especially in the high-payload regime beyond approximately 6.5 bits/subcarrier. The delay-wise MSE exhibits the opposite trend, increasing with $R_{\min}$ as the sensing-oriented waveform degrees of freedom are reduced. These observations confirm that the proposed framework captures a controllable sensing–communication tradeoff: moving toward the lower-$R_{\min}$ region improves sensing, whereas increasing $R_{\min}$ enhances the communication throughput at the cost of sensing SINR reduction. 

\section{Conclusions}

A modulation- and receive-filter-aware framework was developed for multi-cell OFDM-ISAC under clutter and inter-cell interference. Closed-form sensing SINR expressions were derived for MF and RF receivers, revealing that MF is mainly governed by fourth-order constellation moments and power-overlap interference. By contrast, RF is dominated by inverse-symbol-power moments and ratio-type interference. Based on these expressions, sensing-oriented power allocation structures were obtained, showing that MF follows a ramped water-filling behavior, while RF obeys a square-root allocation rule. The joint finite-alphabet constellation selection and power allocation problem was further formulated under communication and power constraints. The analysis also characterized inter-cell spectrum coordination and beyond-CP delay-induced interference leakage, showing when spectral orthogonalization is preferred in the symmetric two-BS benchmark and how propagation-delay-induced leakage redistributes interference power across subcarriers. Numerical results validated the theoretical SINR expressions and demonstrated that constellation and power allocation can effectively reduce the DD-domain interference floor, improve weak-target visibility, and reveal the sensing--communication tradeoff in interference-limited multi-cell OFDM-ISAC systems.

\vspace{-1.5mm}

\section*{Appendix A: \textsc{Proof of Proposition \ref{MFSINR_Expression}}}
\vspace{-1.5mm}
To connect these quantities with the 1-D range kernels, we commence from the MF
self-kernel 
\begin{equation}
	\vspace{-1mm}
	\chi^{(0,0)}_{\mathrm{MF}}[k,p]
	= \frac{1}{\sqrt{NM}}\sum_{m,n}
	|X^{(0)}_m[n]|^2 e^{j\frac{2\pi}{N}nk}e^{-j\frac{2\pi}{M}m p},
	\vspace{-1mm}
\end{equation}
and recall that $X^{(0)}_m[n]=\sqrt{P^{(0)}_n}s^{(0)}_{n,m}$. 
$\mathbb E[|s^{(0)}_{n,m}|^2]=1$ and
$\mathbb E[|s^{(0)}_{n,m}|^4]=\mu^{(0)}_{4,n}$.  Straightforward expansion of
$|\chi^{(0,0)}_{\mathrm{MF}}[k,p]|^2$ and averaging with respect to the
independent symbols across $(n,m)$ yields
	\begin{align}
		\label{eq:MF_A_second}
		\vspace{-1mm}
		\mathbb E \big[|\chi^{(0,0)}_{\mathrm{MF}}[k,p]|^2\big]
		&= \frac{1}{NM}
		\left|\sum_{n=0}^{N-1}P^{(0)}_n e^{j\frac{2\pi}{N}nk}\right|^2
		\left|\sum_{m=0}^{M-1}e^{-j\frac{2\pi}{M}m p}\right|^2   \nonumber \\
		&\quad + \frac{1}{N}\sum_{n=0}^{N-1}
		\big(P^{(0)}_n\big)^2\big(\mu^{(0)}_{4,n}-1\big). 
		\vspace{-1mm}
	\end{align}
In \eqref{eq:MF_A_second}, for the main peak $[k,p]=(0,0)$, we have
\begin{equation}
	\label{eq:MF_A_00}
	\vspace{-1mm}
	\mathbb E \big[|\chi^{(0,0)}_{\mathrm{MF}}[0,0]|^2\big]
	\! = \! \frac{M}{N} \! \Big( \!\sum_{n=0}^{N-1}P^{(0)}_n\Big)^2
	\! +  \frac{1}{N}\sum_{n=0}^{N-1}\big(P^{(0)}_n\big)^2\big(\mu^{(0)}_{4,n}-1\big).
	\vspace{-1mm}
\end{equation}

For the inter-cell kernels $\chi^{(\ell,0)}_{\mathrm{MF}}[k,p]$, an analogous
calculation with $X^{(\ell)}_m[n]$ in place of $X^{(0)}_m[n]$ and using
the cross-correlation definition gives
\begin{equation}
	\label{eq:mf_chi_inter_second}
	\vspace{-1mm}
	\mathbb E \big[|\chi^{(\ell,0)}_{\mathrm{MF}}[k,p]|^2\big]
	= \frac{1}{N}\sum\nolimits_{n=0}^{N-1} P^{(0)}_n P^{(\ell)}_n.
	\vspace{-1mm}
\end{equation}	
The total energy of $\chi^{(\ell,0)}_{\mathrm{MF}}[k,p]$ over all Doppler bins is
governed by \eqref{eq:mf_chi_inter_second} via Parseval’s theorem.  

Averaging over random data symbols and path coefficients, $\mathrm{EISL}^{\mathrm{MF}}_q$ is defined as the symbol-averaged power of all undesired terms on the matched DD bin.
Upon collecting all contributions, we have the mean
{effective interference-plus-sidelobe level} (EISL) for the $q$-th DD bin
as
	\begin{align}
		\vspace{-1mm}
		\label{eq:EISL_MF_final_DD}
		\mathrm{EISL}^{\mathrm{MF}}_q
		&\triangleq \sum_{\substack{q'=0\\ q'\ne q}}^{Q_0}  \!  |\alpha_{q'}^{(0)}|^2 
		\mathbb E \Big[
		\big|\chi^{(0,0)}_{\mathrm{MF}}\big[\tau_q^{(0)} \! - \! \tau_{q'}^{(0)},\,
		\psi_q^{(0)} \! - \! \psi_{q'}^{(0)}\big]\big|^2
		\Big] \nonumber \\
		&\quad + \frac{1}{N}\sum_{\ell=1}^{L-1}\Big(\sum_{q'=0}^{Q_\ell}
		|\alpha_{q'}^{(\ell, 0)}|^2\Big) 
		\sum_{n=0}^{N-1} P^{(0)}_n P^{(\ell)}_n.
		\vspace{-1mm}
	\end{align}
Since the inter-target delay/Doppler separations are generally unknown a priori,
we adopt a maximum-entropy (least-informative) model and treat the relative DD offsets
$\{(\Delta\tau_{q,q'},\Delta\psi_{q,q'})\}_{q'\neq q}$ as i.i.d. samples uniformly distributed over
$\mathcal{G}\setminus\{(0,0)\}$ and independent of the data symbols.

The average sidelobe power per non-peak DD bin can be defined as
\begin{equation}\label{eq:rho_bar_MF_def}
	\vspace{-1mm}
	\bar\rho_{\mathrm{MF}}
	\triangleq
	\frac{\sum_{[k,p]\in\mathcal{G}\setminus\{(0,0)\}}
		\mathbb E\!\left[|\chi^{(0,0)}_{\mathrm{MF}}[k,p]|^2\right]}{NM-1}.
		\vspace{-1mm}
\end{equation}
Under the uniform-offset model, for any $q'\neq q$ we have the approximation
$
	\mathbb E\!\left[|\chi^{(0,0)}_{\mathrm{MF}}(\Delta\tau_{q,q'},\Delta\psi_{q,q'})|^2\right]
	\approx \bar\rho_{\mathrm{MF}},
$
and hence the monostatic self-interference term in \eqref{eq:EISL_MF_final_DD} admits the compact form
$
\sum_{\substack{q'=0, q'\ne q}}^{Q_0}|\alpha_{q'}^{(0)}|^2\,
\mathbb E\!\left[|\chi^{(0,0)}_{\mathrm{MF}}(\Delta\tau_{q,q'},\Delta\psi_{q,q'})|^2\right]
\approx
\left(\sum_{\substack{q'=0, q'\ne q}}^{Q_0}|\alpha_{q'}^{(0)}|^2\right)\bar\rho_{\mathrm{MF}}.
$
Moreover, since $\chi^{(0,0)}_{\mathrm{MF}}[k,p]$ is the unitary 2-D DFT of
$\zeta_{n,m}^{(0)}=|X_m^{(0)}[n]|^2$ with normalization $1/\sqrt{NM}$,
Parseval's theorem yields
\begin{equation}\label{eq:Parseval_MF_total}
	\begin{aligned}
		\vspace{-1mm}
		\sum_{k=0}^{N-1}\sum_{p=0}^{M-1}\mathbb E\!\left[|\chi^{(0,0)}_{\mathrm{MF}}[k,p]|^2\right]
		&=
		\sum_{n=0}^{N-1}\sum_{m=0}^{M-1}\mathbb E\!\left[|X_m^{(0)}[n]|^4\right]
		\\ &=
		M\sum_{n=0}^{N-1}\big(P_n^{(0)}\big)^2\mu^{(0)}_{4,n}.
		\vspace{-1mm}
	\end{aligned}
\end{equation}
Hence, we have
\begin{equation}\label{eq:EISL_chi_MF_closed}
	\begin{aligned}
		\vspace{-1mm}
		\bar\rho_{\mathrm{MF}}
		=
		\frac{
			M\sum_{n=0}^{N-1}\big(P_n^{(0)}\big)^2\mu^{(0)}_{4,n}
			-\mathbb E\!\left[|\chi^{(0,0)}_{\mathrm{MF}}[0,0]|^2\right]
		}{NM-1}.
		\vspace{-1mm}
	\end{aligned}
\end{equation}
where $\mathbb E[|\chi^{(0,0)}_{\mathrm{MF}}[0,0]|^2]$ is given in \eqref{eq:MF_A_00}.
Hence,
$
\bar\rho_{\mathrm{MF}}
=
\frac{
	M\sum_{n=0}^{N-1}(P_n^{(0)})^2\mu^{(0)}_{4,n}
	-\frac{M}{N}(\sum_{n=0}^{N-1}P_n^{(0)})^2
	-\frac{1}{N}\sum_{n=0}^{N-1}(P_n^{(0)})^2(\mu^{(0)}_{4,n}-1)
}{NM-1}.
$
The per-bin MF SINR in the DD domain is then
\begin{equation}
	\vspace{-1mm}
	\mathrm{SINR}^{\mathrm{MF}}_q
	= \frac{|\alpha_q^{(0)}|^2 \mathbb E \big[|\chi^{(0,0)}_{\mathrm{MF}}[0,0]|^2\big]
	}%
	{\mathrm{EISL}^{\mathrm{MF}}_q
		+ \frac{1}{N} \sigma_z^2\sum_{n=0}^{N-1}P^{(0)}_n}.
		\vspace{-1mm}
\end{equation}

\section*{Appendix B: \textsc{Proof of Proposition \ref{SensingSINRRF}}}

Using the symbol moments in \eqref{moment_math} and the independence across BSs
and TF positions, the inter-cell RF kernel power on a DD bin is
\begin{equation}
	\label{eq:RF_chi_inter_second}
	\vspace{-1mm}
	\mathbb E \big[|\chi^{(\ell,0)}_{\mathrm{RF}}[k,p]|^2\big]
	=
	\frac{1}{N}\sum_{n=0}^{N-1}
	\frac{P^{(\ell)}_n}{P^{(0)}_n}\,\mu^{(0)}_{-2,n},
	\vspace{-1mm}
\end{equation}
which is independent of $(k,p)$. Similarly, the DD-bin RF noise power on the
matched bin is
$
	\mathbb E \big[|\Lambda_{z,\mathrm{RF}}[\tau_q^{(0)},\psi_q^{(0)}]|^2\big]
	=
	\frac{\sigma_z^2}{N}\sum_{n=0}^{N-1}\frac{\mu^{(0)}_{-2,n}}{P^{(0)}_n}.
$
Under the adopted normalization and independent-symbol model, these
per-DD-bin interference and noise powers are independent of $M$, while
the desired RF power scales linearly with $M$.
At the matched DD bin, RF ideally suppresses the monostatic sidelobes for within-CP
on-grid paths. Under the normalization in \eqref{eq:Lambda_DD_def}, the desired
RF self-kernel satisfies
$
	\chi^{(0,0)}_{\mathrm{RF}}[0,0]
	=
	\frac{1}{\sqrt{NM}}\sum_{m=0}^{M-1}\sum_{n=0}^{N-1}1
	=
	\sqrt{NM},
$
and hence
$
	\mathbb E\big[|\chi^{(0,0)}_{\mathrm{RF}}[0,0]|^2\big]=NM.
$

Collecting the bistatic and coupling gains for BS-$\ell$ into
$\sum_{q'}|\alpha_{q'}^{(\ell, 0)}|^2$, the RF effective interference-plus-sidelobe
level for the $q$-th DD bin is
\begin{equation}
	\label{eq:EISL_RF_final_DD}
	\vspace{-1mm}
	\mathrm{EISL}^{\mathrm{RF}}_q
	\triangleq
	\sum_{\ell=1}^{L-1}\Big(\sum_{q'=0}^{Q_\ell}|\alpha_{q'}^{(\ell, 0)}|^2\Big)
	\mathbb E \big[|\chi^{(\ell,0)}_{\mathrm{RF}}[k,p]|^2\big].
	\vspace{-1mm}
\end{equation}
Substituting \eqref{eq:RF_chi_inter_second} into \eqref{eq:EISL_RF_final_DD},
and multiplying both numerator and denominator by $N$, yields the RF per-bin SINR in (\ref{eq:SINR_RF_final_DD}).

\section*{Appendix C: \textsc{Proof of Proposition \ref{thm:twoBS_endpoint}}}

For a fixed $N_{\mathrm{ov}}$, homogeneity across tones and Jensen's
inequality imply that the quadratic terms are minimized by equal
power within each active subset. Within the BS-symmetric
overlap-cardinality subclass, it is therefore sufficient to consider
$
P_n^{(0)}=P_n^{(1)}=x,\ \forall n\in\mathcal O;
P_n^{(\ell)}=y,\ \forall n\in\mathcal E_\ell;
P_n^{(\ell)}=0,\ \forall n\in\mathcal E_{\bar \ell}.
$
The sum-power constraint becomes
\begin{equation}
	\label{eq:xy_constraint_new}
	\vspace{-1mm}
	N_{\mathrm{ov}}x+\frac{N-N_{\mathrm{ov}}}{2}y=NP_{\mathrm{ave}}.
	\vspace{-1mm}
\end{equation}
For either BS, the denominator is
\begin{equation}
	\label{eq:D_xy_new}
	\vspace{-1mm}
	D(N_{\mathrm{ov}};x,y)
	=\Big(S_qb+\frac{\Xi}{N}\Big)N_{\mathrm{ov}}x^2
	+S_qb\frac{N-N_{\mathrm{ov}}}{2}y^2.
	\vspace{-1mm}
\end{equation}
Minimizing \eqref{eq:D_xy_new} subject to \eqref{eq:xy_constraint_new} gives
$
\frac{x}{y}=\frac{S_qb}{S_qb+\frac{\Xi}{N}},
$
and hence
\begin{equation}
	\label{eq:Dstar_Nov_new}
	\vspace{-1mm}
	D^\star(N_{\mathrm{ov}})
	=
	\frac{N^2P_{\mathrm{ave}}^2}
	{\displaystyle
		\frac{N_{\mathrm{ov}}}{S_qb+\frac{\Xi}{N}}
		+\frac{N-N_{\mathrm{ov}}}{2S_qb}}.
		\vspace{-1mm}
\end{equation}
Define
$
\Phi(N_{\mathrm{ov}})
\triangleq
\frac{N_{\mathrm{ov}}}{S_qb+\frac{\Xi}{N}}
+\frac{N-N_{\mathrm{ov}}}{2S_qb}
.
$
Then $\Phi(N_{\mathrm{ov}})$ is affine in $N_{\mathrm{ov}}$, and
$D^\star(N_{\mathrm{ov}})=N^2P_{\mathrm{ave}}^2/\Phi(N_{\mathrm{ov}})$.
Therefore:
if $\frac{\Xi}{N}>S_qb$, then $\frac{1}{S_qb+\frac{\Xi}{N}}<\frac{1}{2S_qb}$, so $\Phi(N_{\mathrm{ov}})$ decreases with $N_{\mathrm{ov}}$, and $D^\star(N_{\mathrm{ov}})$ is minimized by the smallest feasible $N_{\mathrm{ov}}$, namely $\min\mathcal F$;
if $\frac{\Xi}{N}<S_qb$, then $\Phi(N_{\mathrm{ov}})$ increases with $N_{\mathrm{ov}}$, and $D^\star(N_{\mathrm{ov}})$ is minimized at $N_{\mathrm{ov}}=N$;
This proves the theorem. 

\section*{Appendix D: \textsc{Proof of Proposition \ref{thm:RF_twoBS_endpoint}}}
\vspace{-1.5mm}

For a fixed $N_{\mathrm{ov}}$, each BS is active on
$
K=\frac{N+N_{\mathrm{ov}}}{2}
$
tones, and the masked-RF mainlobe gain is proportional to $K^2$.
Thus, up to a constant independent of $N_{\mathrm{ov}}$, the optimized
RF SINR can be written as
\begin{equation}
	\vspace{-1mm}
	\mathrm{SINR}_{\mathrm{RF}}^\star(N_{\mathrm{ov}})
	\propto
	\frac{K^2}
	{\displaystyle
		\min_{\{P_n^{(0)}\},\{P_n^{(1)}\}}\max\{D_0,D_1\}} .
	\vspace{-1mm}
\end{equation}
For the homogeneous symmetric partition, Jensen's inequality gives
\begin{equation}
	\vspace{-1mm}
	\sum\nolimits_{n\in\mathcal O\cup\mathcal E_\ell}
	\frac{1}{P_n^{(\ell)}}
	\ge
	\frac{K^2}{N P_{\mathrm{ave}}},
	\qquad \ell\in\{0,1\}.
	\vspace{-1mm}
\end{equation}
Moreover, the two ratio-type overlap terms satisfy
\begin{equation}
	\vspace{-1mm}
	\max\left\{
	\sum_{n\in\mathcal O}\frac{P_n^{(1)}}{P_n^{(0)}},
	\sum_{n\in\mathcal O}\frac{P_n^{(0)}}{P_n^{(1)}}
	\right\}
	\ge
	N_{\mathrm{ov}}.
	\vspace{-1mm}
\end{equation}
Both lower bounds are simultaneously attained by the uniform symmetric
allocation
$
P_n^{(\ell)}
=
\frac{N P_{\mathrm{ave}}}{K},
n\in\mathcal O\cup\mathcal E_\ell, \ell\in\{0,1\}.
$
Hence, we have
\begin{equation}
	\vspace{-1mm}
	\min_{\{P_n^{(0)}\},\{P_n^{(1)}\}}\max\{D_0,D_1\}
	=
	\mu_{-2}\sigma_z^2
	\frac{K^2}{N P_{\mathrm{ave}}}
	+
	\mu_{-2}\Xi N_{\mathrm{ov}}.
	\vspace{-1mm}
\end{equation}
Therefore,
$
\mathrm{SINR}_{\mathrm{RF}}^\star(N_{\mathrm{ov}})
\propto
\frac{K^2}
{\mu_{-2}\sigma_z^2\frac{K^2}{N P_{\mathrm{ave}}}
	+\mu_{-2}\Xi N_{\mathrm{ov}}}.
$
Equivalently,
$
\mathrm{SINR}_{\mathrm{RF}}^\star(N_{\mathrm{ov}})
\propto
\frac{1}
{\frac{\mu_{-2}\sigma_z^2}{N P_{\mathrm{ave}}}
	+\mu_{-2}\Xi\frac{N_{\mathrm{ov}}}{K^2}}.
$
Using $K=(N+N_{\mathrm{ov}})/2$, the overlap-dependent term becomes
$
\frac{N_{\mathrm{ov}}}{K^2}
=
\frac{4N_{\mathrm{ov}}}{(N+N_{\mathrm{ov}})^2}.
$
Therefore, for $0\le x\le N$, we have 
\begin{equation}
	\frac{d}{dx}
	\left(
	\frac{x}{(N+x)^2}
	\right)
	=
	\frac{N-x}{(N+x)^3}
	\ge 0.
\end{equation}
Thus, when $\Xi>0$, the normalized denominator in the reciprocal form of
$\mathrm{SINR}_{\mathrm{RF}}^\star(N_{\mathrm{ov}})$ is nondecreasing
in $N_{\mathrm{ov}}$, and the optimized RF SINR is strictly decreasing
for $N_{\mathrm{ov}}<N$. Hence the optimal overlap is the smallest
feasible value in $\mathcal F$. This completes the proof. 
	
\footnotesize  	
\bibliography{mybibfile}
\bibliographystyle{IEEEtran}

\end{document}